\documentclass[sigconf,nonacm=true,natbib=false]{acmart}

\usepackage{algorithm}
\usepackage{algpseudocode}
\usepackage{subcaption}
\usepackage{stfloats}
\usepackage[normalem]{ulem}   
\usepackage{xcolor}           
\usepackage{colortbl}         
\usepackage{booktabs}         
\usepackage{makecell}         

\AtBeginDocument{%
  }

\setcopyright{acmlicensed}
\copyrightyear{2026}
\acmYear{2026}
\acmDOI{XXXXXXX.XXXXXXX}
\acmConference[CCS'26]{}{November 2026}{Den Haag, The Netherlands}
\acmISBN{978-1-4503-XXXX-X/2018/06}

\RequirePackage[
  datamodel=acmdatamodel,
  style=acmnumeric,
  ]{biblatex}

\begin{document}

\title{Herring: Parallel Batch-Order-Fairness on DAG-based Blockchain Consensus}


\author{Marko Putnik}
\affiliation[obeypunctuation=true]{%
  \institution{Delft University of Technology},
  \country{The Netherlands}%
}
\email{M.Putnik@student.tudelft.nl}

\author{Jérémie Decouchant}
\affiliation[obeypunctuation=true]{%
  \institution{Delft University of Technology},
  \country{The Netherlands}%
}
\email{J.Decouchant@tudelft.nl}








\begin{abstract}
Transaction ordering attacks extract billions of dollars annually from decentralized finance users in the form of Maximal Extractable Value (MEV). Byzantine Fault-Tolerant (BFT) consensus protocols guarantee total order but place no constraint on how that order is chosen, leaving the door open for adversarial reordering. Batch-order-fairness (batch-OF) protocols close this gap, but existing designs pay a steep performance price for this guarantee. Leader-based protocols such as Themis concentrate all fairness decisions at a single replica, while recent DAG-based proposals FairDAG and DAG of DAGs (DoD) force their fairness layer into strictly serial execution despite running on multi-proposer DAGs.

We present Herring, the first $\gamma$-batch-OF DAG BFT protocol whose fairness layer parallelizes the dominant graph construction cost across committed subdags. Herring combines post-consensus graph construction with explicit missing edge resolution piggybacked on the DAG's reliable broadcast layer, a pairing that turns fair ordering from a per-round serial bottleneck into a CPU-bound task. We also uncover previously unreported liveness vulnerabilities in both FairDAG-RL and DoD that a malicious client can trigger to halt the fairness layer indefinitely, and propose patches that we integrate into our reimplementations.

We implement Herring on top of the Rust implementation of Narwhal \& Tusk and evaluate it against FairDAG-RL, DoD-W, and Themis. Herring tracks the throughput of Narwhal \& Tusk closely up to roughly $10{,}000$\,tx/s, achieves roughly $90\%$ higher saturation throughput than FairDAG-RL and $100\%$ higher than DoD-W, and substantially reduces execution latency at saturation. Our adversarial evaluation shows that Herring confines adversarial reordering to only the most fragile transaction pairs, a robustness that stems from the combination of its batch-OF guarantees and the underlying DAG's dissemination layer.
\end{abstract}





\maketitle


\newcommand{\good}[1]{\textcolor{green!55!black}{\textbf{#1}}}
\newcommand{\bad}[1]{\textcolor{red!60!black}{\uwave{#1}}}

%

%

\section{Introduction}

\pagenumbering{arabic}

Decentralized Finance (DeFi) has grown into a multi-billion-dollar ecosystem built on top of permissionless blockchains, where the order in which transactions are committed often determines who profits and who pays.
Transaction ordering attacks such as front-running, back-running, and sandwich attacks have stolen billions of dollars from ordinary users in the form of Maximal Extractable Value (MEV)~\cite{MEV_theoretical, chainlinkMEV2023}. Byzantine Fault-Tolerant (BFT) consensus protocols~\cite{pbft, hotstuff, Narwhal&Tusk_Mar2022} guarantee that all honest replicas agree on a single total order of transactions, but place no constraint on how that order is chosen relative to the order in which transactions arrived at the network.
This gap has been extensively exploited.
In leader-based protocols such as HotStuff~\cite{hotstuff}, a leader can unilaterally reorder or delay transactions within its block without violating any consensus property.

To close this gap, a recent line of work~\cite{Kelkar_Cornell_Aequitas_Aug2020,Kelkar_Cornell_Themis_Nov2022,Zhang_Cornell_Pompe_Nov2020,Cachin_Bern_2024,Kiayias_Edinburgh_Taxis_2024,Chen_China_Auncel_Aug2024} has introduced the notion of \emph{order-fairness (OF)} as an additional safety property for BFT consensus, proposing several definitions that differ in strength and tractability.
Among these, $\gamma$-batch-order-fairness (batch-OF)~\cite{Kelkar_Cornell_Aequitas_Aug2020} has emerged as the most practical yet meaningful choice, and it guarantees that whenever a fraction $\gamma$ of the replicas locally receive a transaction $\mathit{tx}$ before another transaction $\mathit{tx}'$, then all honest replicas must deliver $\mathit{tx}$ no later than $\mathit{tx}'$.
The Themis leader-based protocol~\cite{Kelkar_Cornell_Themis_Nov2022} demonstrated that batch-OF can be retrofitted onto HotStuff~\cite{hotstuff} with standard liveness. However, leader-based OF protocols remain fundamentally capped by the single-leader bottleneck of their underlying consensus engine, where one replica must collect all local orderings, compute the expensive global fair ordering, and drive consensus.

DAG BFT protocols~\cite{Narwhal&Tusk_Mar2022, DAG_Rider_Jun2021, Bullshark_Sep2022, Shoal_2023, Shoal++_2025, Mysticeti_2025, Mahi-Mahi_2025} eliminate this bottleneck by allowing replicas to propose vertices concurrently, achieving substantially higher throughput. Beyond throughput, DAG BFT protocols also tolerate faulty leaders gracefully, as the DAG continues to grow at full rate even with faulty replicas, with only the commitment being temporarily delayed.
Two recent proposals have brought batch-OF to DAG-based BFT.
FairDAG-RL~\cite{Kang_FairDAG_Aug2025} constructs global dependency graphs \emph{after} each subdag is committed (post-consensus) and resolves missing edges \emph{implicitly}, using ordering indicators deposited in later committed subdags.
DoD~\cite{Nagda_UPenn_PhD_2025} takes the opposite approach, constructing the global dependency graph \emph{before} consensus and embedding it in the payload of each DAG vertex.
While both protocols improve upon Themis in raw throughput by leveraging the multi-proposer DAG design, each of them serializes sub-DAG execution that prevents them from approaching the throughput of their underlying DAG consensus.

\textbf{Problem.}
Although FairDAG-RL and DoD differ in where graph construction occurs, both protocols force their fairness layers into \textbf{strictly serial, round-by-round execution}, just at different points in the pipeline.
In DoD, the global-order graph is constructed as part of the DAG vertex creation process. This is an $O(B^2)$ operation (where $B$ is the batch size) that \textbf{sits on the critical path of every DAG round}. The DAG cannot advance to the next round until the fairness computation completes, directly throttling the DAG's otherwise fast round progression.
In effect, DoD inherits the high-throughput potential of DAG-based consensus only to immediately surrender it by embedding the most expensive fairness operation on the DAG's critical path.
In FairDAG-RL, graph construction happens after consensus, so the underlying DAG runs at full speed. However, the fairness layer still processes one subdag at a time, leaving the rest of the validator's CPU idle and forcing fair ordering to become the system bottleneck even when the DAG runs comfortably.

\noindent\textbf{Where the fairness layer actually spends its time.}
The fairness layer in both FairDAG-RL and DoD inherits the same high level structure as the Themis \textsc{FairPropose} algorithm.
Each committed subdag goes through four phases on the fairness processor: (i) extracting the per replica local orderings from the subdag's certificates, (ii) building the pairwise weight matrix that records how many replicas place each pair of transactions in each direction, (iii) running Tarjan and topological sort on the resulting dependency graph, and (iv) applying the result back to cumulative state.
To understand what makes serial execution so costly in practice, we measured per task CPU time across these four phases in FairDAG-RL and Figure~\ref{fig:exp_cpu_time} shows the result.
The pairwise weight matrix alone per committed subdag consumes more than three times the cost of Tarjan plus topological sort and almost an order of magnitude more than the extraction and result handling steps combined.
The weight phase is also the only phase whose cost scales quadratically with the size of the active vertex set, which itself grows with throughput.
A serial fairness layer that finalizes one subdag at a time therefore spends most of its wall clock time on the expensive weight matrix calculation and leaves the other CPU cores idle.
This observation, that the dominant cost is exactly the part of the algorithm that has no cross subdag data dependency, is what motivates Herring's central design choice.

\begin{figure}[t]
    \centering
    \includegraphics[width=\linewidth]{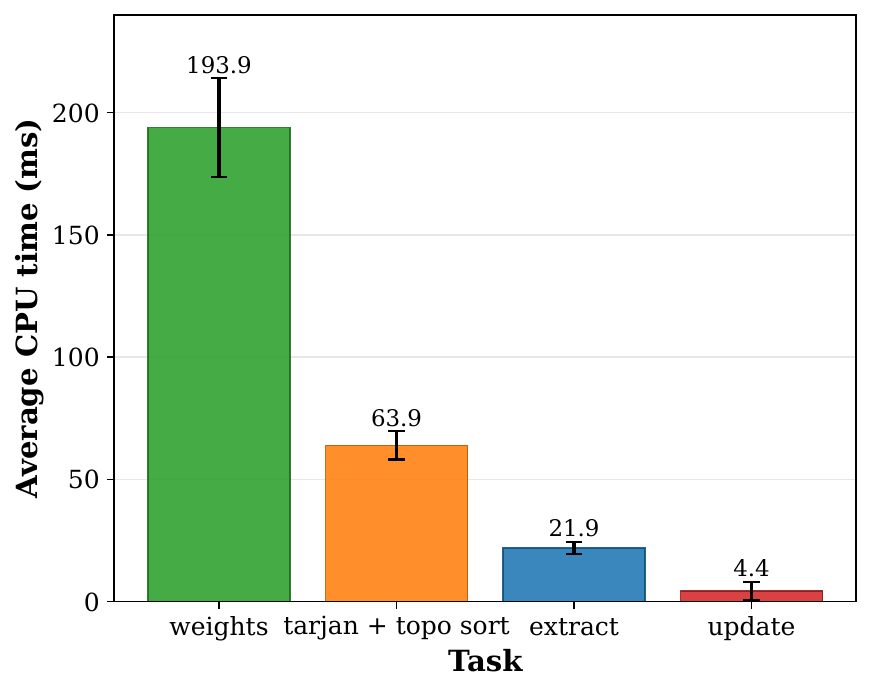}
    \caption{Average CPU time per phase of FairDAG-RL's fairness layer, broken down by the four \textsc{FairPropose} phases of extract local orderings, pairwise weight matrix computation, Tarjan plus topological sort, and result update. The pairwise weight phase dominates total cost.}
    \label{fig:exp_cpu_time}
\end{figure}

\noindent\textbf{Solution.}
In this paper we present Herring, a batch-OF DAG BFT protocol built on top of Narwhal \& Tusk~\cite{Narwhal&Tusk_Mar2022}, whose central design insight is that the dominant per-subdag graph construction cost can be made \emph{parallel} across committed subdags, with the remaining cross subdag work confined to a small number of synchronization points.
Herring exploits this insight by running graph construction tasks for multiple committed subdags concurrently on a thread pool, turning OF from a per-round serial bottleneck into a CPU-bound task that scales with the number of available validator cores.
Two supporting design choices keep this parallelism sound.
Like FairDAG-RL, Herring builds its dependency graphs post-consensus, so that all fairness work sits off the DAG's critical path.
Inspired by Themis~\cite{Kelkar_Cornell_Themis_Nov2022}, Herring resolves missing edges \emph{explicitly}\footnote{An implicit resolution scheme as in FairDAG-RL is in principle compatible with parallel construction, but it forces every parked graph to wait for evidence to trickle in through later committed subdags, with every subdag task needing to read and mutate the weights of earlier parked graphs, thus making concurrency complex to manage. Explicit missing edge resolution decouples each parked graph from the others and lets all of them accumulate votes concurrently through Narwhal's reliable broadcast.} via FairUpdate votes that piggyback on workers' outgoing batches and travel through Narwhal's reliable broadcast, so that each parked graph accumulates its resolutions independently.
Table~\ref{tab:protocol_comparison} summarizes the key design differences between Herring, FairDAG-RL and DoD.


\begin{table*}[t]
\centering
\caption{Design comparison of the leading batch-OF DAG BFT protocols. An entry in \good{green} indicates the best design choice, and an entry in \bad{red} indicates a shortcoming. Liveness attacks\textsuperscript{$\dagger$} were identified and patched in this work (Appendix~\ref{liveness_attacks}). The Corruption column reports the fault tolerance threshold at $\gamma = 1$. FairDAG-RL achieves $n \geq 3f{+}1$ because it assumes a DAG with weak edges\textsuperscript{$\ast$}, whereas Herring and DoD operate on DAG without weak edges and require $n \geq 4f{+}1$.}
\label{tab:protocol_comparison}
\renewcommand{\arraystretch}{1.35}
\begin{tabular}{l ccccc c}
\toprule
 & \makecell{Graph\\Construction}
 & \makecell{Missing Edge\\Resolution}
 & \makecell{Performance\\Bottleneck}
 & \makecell{Liveness\\Attacks}
 & \makecell{Corruption\\($\gamma = 1$)}
 & \makecell{Parallel\\Graph Construction} \\
\midrule
FairDAG-RL~\cite{Kang_FairDAG_Aug2025}
    & \good{Post-consensus}
    & \bad{Implicit}
    & \bad{Fairness layer}
    & \bad{Yes\textsuperscript{$\dagger$}}
    & $n \geq 3f{+}1$\textsuperscript{$\ast$}
    & \multicolumn{1}{|c}{\cellcolor{red!10}\bad{No}} \\
DoD~\cite{Nagda_UPenn_PhD_2025}
    & \bad{Pre-consensus}
    & \bad{Implicit}
    & \bad{DAG pipeline}
    & \bad{Yes\textsuperscript{$\dagger$}}
    & $n \geq 4f{+}1$
    & \multicolumn{1}{|c}{\cellcolor{red!10}\bad{No}} \\
Herring (this work)
    & \good{Post-consensus}
    & \good{Explicit}
    & \good{None}
    & \good{No}
    & $n \geq 4f{+}1$
    & \multicolumn{1}{|c}{\cellcolor{green!15}\good{Yes}} \\
\bottomrule
\end{tabular}

\vspace{0.5em}
\footnotesize
\textsuperscript{$\ast$} Weak edges are block references to the left-behind not-directly-referenced blocks, e.g., references to slow but honest nodes' blocks. With weak edges every honest node's local ordering is present upon subdag commit. Without weak edges this is not the case, thus the threshold becomes $n \geq 4f{+}1$. Note that Narwhal~\cite{Narwhal&Tusk_Mar2022} by design removes weak edges specifically to enable garbage collection, which is infeasible when weak edges are present.
\end{table*}


\textbf{Overview and Contributions.}
As a summary, this paper makes the following contributions:
    
    $\bullet$ We present Herring, the first batch-OF DAG BFT protocol that parallelizes global graph construction across committed subdags, by combining post-consensus graph construction with explicit missing edge resolution that piggybacks on Narwhal's existing batch dissemination.

    $\bullet$ We identify and analyze previously unreported liveness bugs in the fairness layers of both FairDAG-RL~\cite{Kang_FairDAG_Aug2025} and DoD~\cite{Nagda_UPenn_PhD_2025}. For each bug we give a concrete trigger scenario, propose a patch, and integrate the patch into our reimplementations so that both baselines actually run to completion in our evaluation (Appendix~\ref{liveness_attacks}).

    $\bullet$ We prove that Herring satisfies $\gamma$-batch-OF, together with the standard safety and liveness properties of its underlying DAG BFT consensus.

    $\bullet$ We implement Herring on top of the Rust implementation of Narwhal \& Tusk~\cite{Narwhal&Tusk_Mar2022} and evaluate it against FairDAG-RL, DoD-W, and Themis across a range of workloads, network sizes, and fairness parameters. Herring tracks the throughput of its underlying DAG consensus closely up to $10{,}000$\,tx/s, sustains roughly $90\%$ higher saturation throughput than FairDAG-RL and $100\%$ higher than DoD-W, and achieves up to $75\%$ lower execution latency than both DAG-based baselines.

    $\bullet$ We demonstrate Herring's structural resistance to Byzantine collusion by comparing it against Themis under the reversing order adversarial strategy of Kelkar et al.~\cite{Kelkar_Cornell_Themis_Nov2022}, showing that Herring confines adversarial reordering to only the most fragile transaction pairs.


\section{Background}

This section introduces the two foundations Herring builds on: $\gamma$-batch-OF, the fairness property Herring guarantees, and Narwhal \& Tusk, its underlying DAG BFT consensus.

\subsection{\texorpdfstring{$\gamma$-Batch-Order-Fairness}{gamma-Batch-Order-Fairness}}

Traditional BFT consensus protocols guarantee safety and liveness but place no constraint on the relationship between the order in which transactions are received by replicas and the order in which they appear in the total order. Order-fairness protocols aim to enforce such a relationship.

The strongest intuitive notion is \emph{receive-order-fairness}~\cite{Kelkar_Cornell_Aequitas_Aug2020}, which requires that if a $\gamma$ fraction of replicas receive $\mathit{tx}$ before $\mathit{tx}'$, then all honest replicas must output $\mathit{tx}$ strictly before $\mathit{tx}'$. However, Kelkar et al.~\cite{Kelkar_Cornell_Aequitas_Aug2020} showed that receive-order-fairness is impossible in general, even when all replicas are honest, due to the \textit{Condorcet paradox}. As a simple example, consider three replicas receiving transactions in the orders $[a, b, c]$, $[b, c, a]$, and $[c, a, b]$. A majority prefers $a$ before $b$, $b$ before $c$, and $c$ before $a$, forming a cycle that cannot be linearized without violating at least one majority preference.

To circumvent this impossibility, Kelkar et al.~\cite{Kelkar_Cornell_Aequitas_Aug2020} introduced the relaxed notion of $\gamma$\emph{-batch-OF} (see Definition~\ref{def:batch_of}), which groups cyclically dependent transactions into batches and only enforces ordering constraints across batches. Transactions belonging to the same batch are output at the 'same time'. Further, within a batch, any total ordering is allowed, as the fairness guarantee only constrains relative ordering across batches.

\begin{definition}[$\gamma$-Batch-Order-Fairness~\cite{Kelkar_Cornell_Aequitas_Aug2020}]
\label{def:batch_of}
For any two transactions $\mathit{tx}$ and $\mathit{tx}'$ received by all replicas, if $\gamma n$ replicas receive $\mathit{tx}$ before $\mathit{tx}'$, then all honest replicas output $\mathit{tx}$ no later than $\mathit{tx}'$, where $\tfrac{1}{2} < \gamma \leq 1$.
\end{definition}

Aequitas~\cite{Kelkar_Cornell_Aequitas_Aug2020} first realized batch-OF only guarantees weak liveness, because Condorcet cycles can chain together and grow to arbitrary length. Themis~\cite{Kelkar_Cornell_Themis_Nov2022} solved this through \emph{batch unspooling}, which allows transactions within a cycle to be output incrementally without waiting for the cycle to fully form, while still ensuring that all transactions in the same batch appear contiguously in the output. Herring adopts the same unspooling approach.

\smallskip
\noindent\textbf{Dependency graphs.}
Both Themis and the batch-OF DAG BFT protocols that followed construct a \emph{dependency graph} to capture pairwise ordering preferences. Given $n{-}f$ replica orderings, a directed edge from $\mathit{tx}$ to $\mathit{tx}'$ is added when the number of replicas that received $\mathit{tx}$ before $\mathit{tx}'$ exceeds a threshold of $n(1{-}\gamma) + f + 1$. Transactions are classified by the number of ordering reports they have accumulated: (i) \emph{Solid}, if the transaction appears in at least $n{-}2f$ orderings; (ii) \emph{Shaded}, if it appears in at least $n(1{-}\gamma) + f + 1$ orderings but in fewer than $n{-}2f$; and (iii) \emph{Blank}, otherwise.

Only non-blank transactions are inserted into the dependency graph. The ordering is finalized once the graph becomes a \emph{tournament}, i.e., when every pair of non-blank vertices is connected by exactly one directed edge. Pairs that have not yet accumulated enough evidence to determine a direction are called \emph{missing} and their edges must be resolved before finalization.

\subsection{Narwhal \& Tusk}
\label{sec:narwhal_tusk}

Narwhal \& Tusk~\cite{Narwhal&Tusk_Mar2022} is the asynchronous DAG BFT protocol Herring builds on. It decouples transaction dissemination from ordering, eliminating the leader bandwidth bottleneck of traditional BFT protocols. Each replica (called a \emph{validator}) runs a \emph{primary} process and one or more \emph{worker} processes. Workers receive client transactions, assemble them into batches, and reliably broadcast those batches to workers with the same identifier on other validators. The primary manages only lightweight metadata and the DAG structure, referencing worker batches by their cryptographic digests. This separation allows throughput to scale horizontally by adding more workers per validator.

\smallskip
\noindent\textbf{DAG construction.}
The protocol proceeds in rounds. In each round $r$, every validator proposes a single DAG vertex through its primary, containing references (strong edges) to at least $n{-}f$ vertices from round $r{-}1$. Before a vertex can be proposed, the primary must collect $n{-}f$ certificates of availability from the previous round, where each certificate attests that at least $f{+}1$ honest validators have stored the corresponding vertex's data. This reliable broadcast mechanism ensures that any certified vertex is retrievable even if its original proposer fails. Unlike DAG-Rider~\cite{DAG_Rider_Jun2021}, Narwhal \& Tusk does not use weak edges (i.e., references to vertices emitted before round $r{-}1$), which allows replicas to garbage collect older uncommitted (orphaned) DAG vertices, thus making the protocol practical for deployment.

\smallskip
\noindent\textbf{Consensus and total ordering.}
Tusk operates on top of the Narwhal DAG without additional communication. It groups rounds into \emph{waves} of fixed length, at the end of which a leader vertex $L_r$ is elected through a shared coin. If $L_r$ receives at least $f{+}1$ references in the following round, it is committed. Upon commitment, all vertices in the causal history of $L_r$ not already committed under an earlier leader form the \emph{subdag} $A_r$. The DAG is thereby partitioned into non-overlapping subdags $(A_{r_1}, A_{r_2}, \ldots)$. Vertices of each subgraph $A_r$ are deterministically ordered by topological sort.

\smallskip
\noindent\textbf{Key properties.}
Narwhal \& Tusk guarantees the following properties:
\begin{itemize}
    \item \emph{Agreement}. If a correct replica commits a vertex $v$, then all correct replicas eventually commit $v$.
    \item \emph{Total order}. If a correct replica commits $v$ before $v'$, then every correct replica commits $v$ before $v'$.
    \item \emph{Validity}. If a correct replica broadcasts a vertex $v$, then all correct replicas eventually commit $v$.\footnote{As Narwhal's garbage collection might orphan vertices that never commit, it is then necessary to re-inject the transactions of orphaned vertices into later vertices, which together with the $1/2$-Chain Quality property of the DAG ensures eventual commitment within a constant number of rounds in expectation.}
\end{itemize}

\smallskip
\noindent\textbf{Self-referencing rule.}
In addition to the standard vertex validity rules, Herring requires that each correct replica includes a reference to its own previous round certificate when proposing a new vertex.
Without this rule, a correct replica's earlier vertex can be orphaned while a later vertex is committed, causing ordering evidence to arrive out of order across subdags.
The self-referencing chain prevents this by guaranteeing that if a replica's round $r$ vertex is committed, all of its earlier vertices are in the causal history and therefore committed in the same or an earlier subdag.
Combined with the worker's monotonic \emph{local ordering indicator (LOI)} assignment, this ensures that each correct replica's cumulative ordering across committed subdags is non-decreasing in LOI.
FairDAG~\cite{Kang_FairDAG_Aug2025} adopts the same rule.

\section{System Model}

This section describes the system and threat model under which Herring operates. Herring inherits the asynchronous network and Byzantine fault assumptions of its underlying DAG BFT protocol, Narwhal \& Tusk~\cite{Narwhal&Tusk_Mar2022}, and strengthens only the fault-tolerance threshold to match the requirement of $\gamma$-batch-OF.

\smallskip
\textbf{System and threat model.}
We consider a system of $n$ known replicas communicating via message passing over an asynchronous network where messages between correct replicas are eventually delivered but with no bound on delay. A computationally bounded adversary can corrupt up to $f$ replicas during execution. Corrupted replicas are Byzantine and may behave arbitrarily, e.g., they can reorder or withhold messages, and report false local orderings. The remaining $n{-}f$ replicas follow the protocol faithfully. Replicas are connected by pairwise authenticated channels. Clients submit transactions to all replicas and wait for finalization.

Following the standard assumption in OF protocols~\cite{Kelkar_Cornell_Aequitas_Aug2020,Kelkar_Cornell_Themis_Nov2022,Kang_FairDAG_Aug2025}, the external network between clients and replicas is non-adversarial, as an adversary controlling transaction delivery could manipulate input orderings directly and render any OF guarantee vacuous.

\begin{figure*}[t]
    \centering
    \begin{subfigure}[c]{0.48\textwidth}
        \centering
        \includegraphics[width=\linewidth]{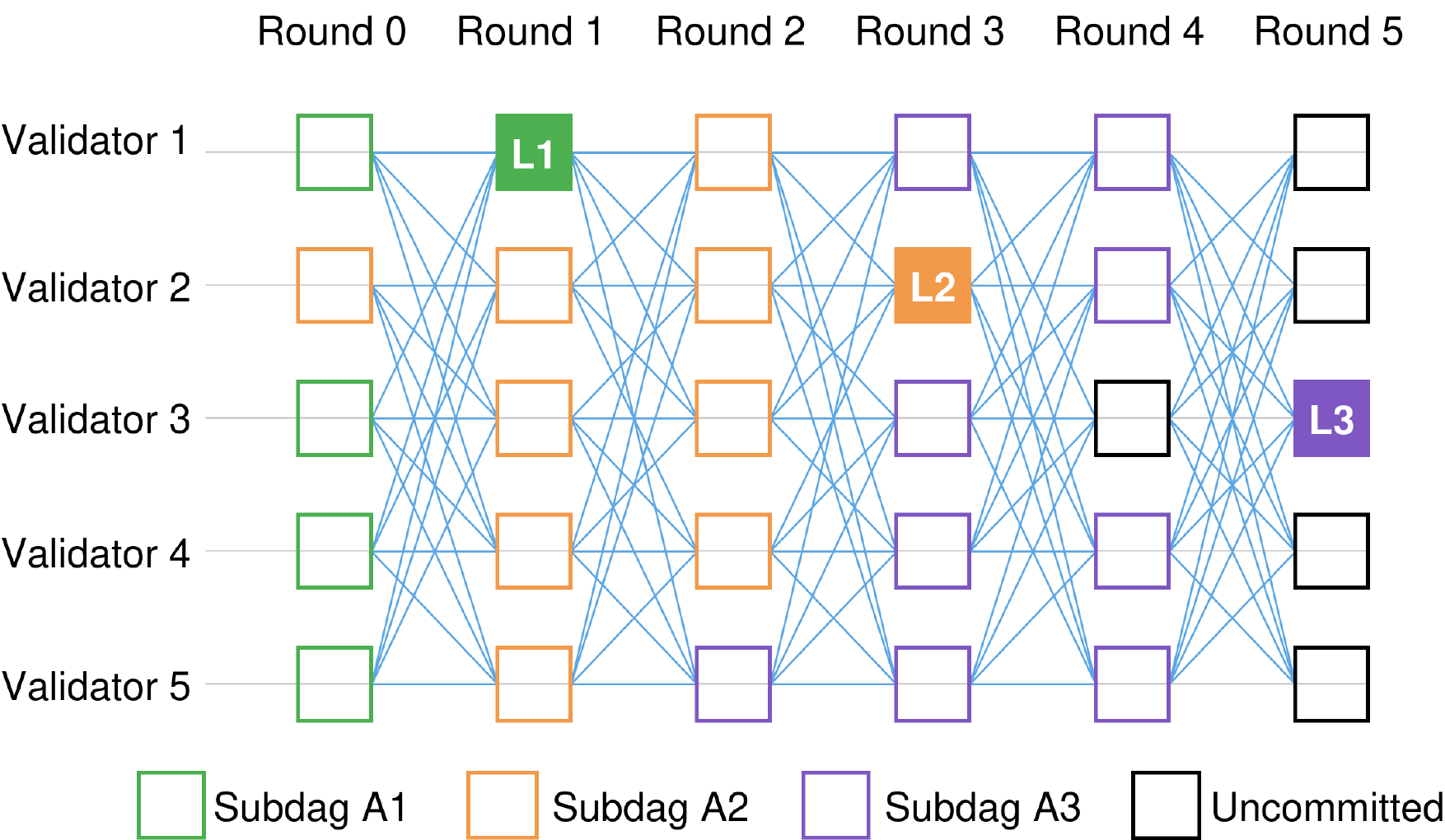}
        \caption{DAG BFT consensus layer. An unmodified Narwhal \& Tusk
        DAG with five validators across six rounds, with committed leaders
        $L_1, L_2, L_3$ producing subdags $A_1, A_2, A_3$.}
        \label{fig:herring_arch_a}
    \end{subfigure}\hfill
    \stepcounter{subfigure}
    \begin{subfigure}[c]{0.48\textwidth}
        \centering
        \includegraphics[width=\linewidth]{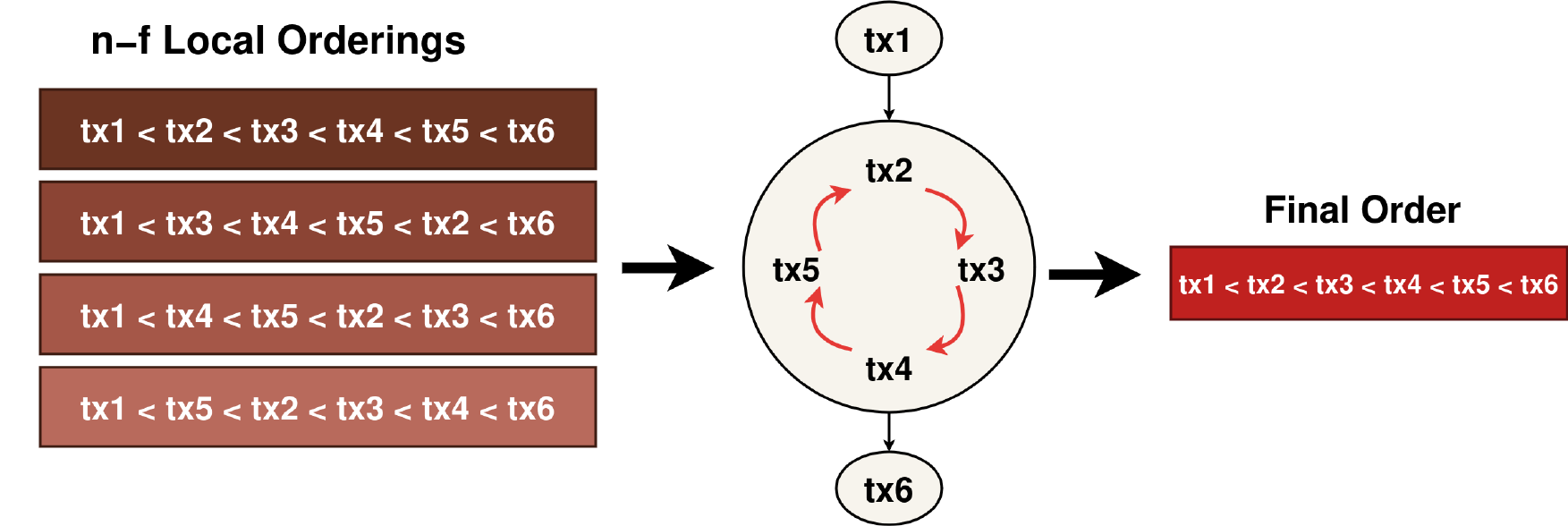}
        \caption{Per-subdag graph and order resolution. From the $n{-}f$
        local orderings, the task builds $G_r$, decomposes it into SCCs,
        identifies the anchor, and linearizes each SCC up to the anchor.
        Transactions forming a Condorcet cycle
        ($\{\mathit{tx}_2, \mathit{tx}_3, \mathit{tx}_4, \mathit{tx}_5\}$)
        are output `at the same time' as part of the same batch.}
        \label{fig:herring_arch_c}
    \end{subfigure}

    \vspace{0.6em}

    \setcounter{subfigure}{1}
    \begin{subfigure}{\textwidth}
        \centering
        \includegraphics[width=\linewidth]{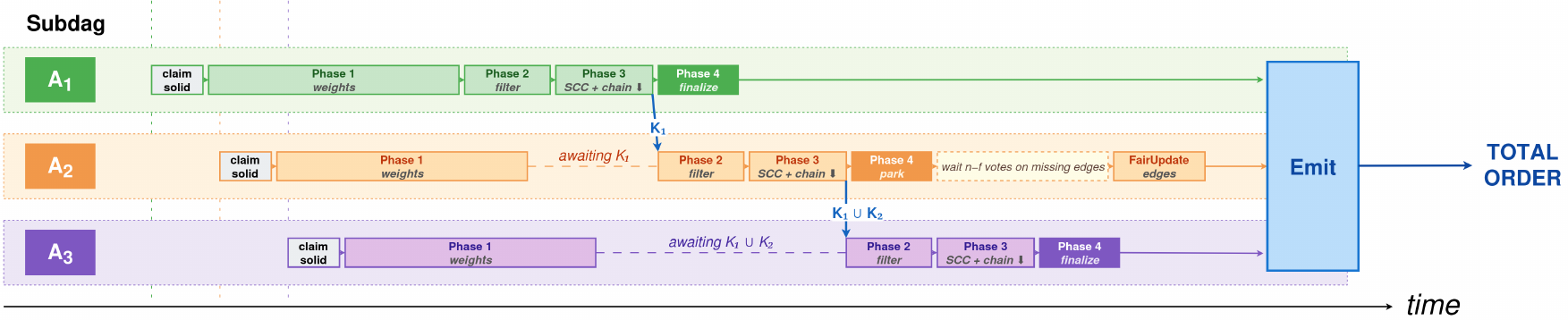}
        \caption{Herring per-subdag parallel pipeline. After a synchronous
        extract\,+\,solid-claim step on the main thread, each subdag's
        task runs four phases on a thread pool. \emph{Phase~1} (pairwise
        weight matrix, the dominant cost) is fully parallel across
        in flight subdags. The only cross subdag synchronization is the
        cumulative-$K$ chain forwarded early by \emph{Phase~3}, which the
        next subdag's \emph{Phase~2} awaits before filtering its active
        set. \emph{Phase~4} either finalizes or parks the graph, where
        parked subdags accumulate $n{-}f$ FairUpdate votes over Narwhal's
        reliable broadcast and then run \textsc{ApplyFairUpdate}. The
        Emit serialization point drains ready buffers in subdag commit
        order. Dashed wait segments on $A_2, A_3$ visualize the passing
        of the cumulative chain, i.e., $A_2$'s \emph{Phase~2} cannot
        begin until $K_1$ arrives from $A_1$'s \emph{Phase~3}, and
        $A_3$'s until $K_1 \cup K_2$ arrives from $A_2$'s.}
        \label{fig:herring_arch_b}
    \end{subfigure}

    \caption{Herring system overview.}
    \label{fig:herring_architecture}
\end{figure*}

\smallskip
\textbf{Fault-tolerance threshold.}
Narwhal \& Tusk requires $n \geq 3f{+}1$ for consensus safety and liveness. Herring strengthens this to the threshold required by $\gamma$-batch-OF, i.e., $n > \frac{4f}{2\gamma - 1}$, where $\frac{1}{2} < \gamma \leq 1$, reducing to $n \geq 4f{+}1$ for $\gamma = 1$ ($n \geq 3f{+}1$ if only crash faults are present). This threshold is a direct consequence of building on top of Narwhal \& Tusk, whose DAG does not contain weak edges. Narwhal achieves its validity property through reinjection of orphaned vertices into later rounds, which eventually commits every correct replica's transactions but provides no bound on how many rounds this takes. For fair ordering, waiting for reinjection is not an option, since the dependency graph for a subdag must be built from the orderings actually present in that committed subdag. A committed subdag contains at most $n{-}f$ orderings, of which up to $f$ may be Byzantine, leaving $n{-}2f$ honest contributions. The Themis threshold of $n > \frac{4f}{2\gamma-1}$ is what makes this honest majority strict for every transaction pair, even after Byzantine influence. 

Protocols built on DAGs with weak edges, such as FairDAG-RL, can achieve a weaker threshold of $n \geq 3f{+}1$ for $\gamma = 1$ because weak edges add $f$ extra honest contributions per subdag. However, weak edges come at a significant practical cost. As noted by Danezis et al.~\cite{Narwhal&Tusk_Mar2022}, the original Narwhal \& Tusk paper explicitly removes weak edges precisely because they make garbage collection infeasible, since any block received may eventually be referenced by a future weak edge and must therefore be stored indefinitely. Herring inherits Narwhal \& Tusk's garbage collection and operates within a fixed memory footprint, at the cost of the stronger $n \geq 4f{+}1$ threshold.

Following the analysis of Ladelsky et al.~\cite{Ladesky_2025}, accommodating this stronger threshold requires adjusting the underlying Narwhal protocol so that rounds consist of at least $(k{-}1)f + 1$ vertices and each vertex references $(k{-}1)f + 1$ vertices from the previous round, where $k = \frac{4}{2\gamma-1}$ for Byzantine faults (reducing to $k \geq 3$ for crash faults only).

\smallskip
\textbf{Cryptography.}
We assume a public-key infrastructure (PKI) where each replica holds a unique private key and all public keys are known. Digital signatures allow any replica to verify the origin and integrity of a message. We further assume a collision-resistant hash function $H$ that maps arbitrary-length inputs to fixed-length digests, used throughout the protocol to identify transactions and batches~\cite{KatzLindell2014}.

\section{Herring: Parallel Batch-OF on DAG BFT}
\label{sec:herring}
 
This section describes the design of Herring.
We first give an architectural overview, then describe how workers assign local ordering indicators (LOIs) and cast FairUpdate votes, and finally present the per-subdag graph construction and order finalization.

\subsection{Overview}
\label{sec:herring_overview}

Herring preserves the architecture of Narwhal \& Tusk and adds a per replica \emph{fairness processor} that builds a global dependency graph and emits a fair transaction order upon every committed subdag.
The starting point of Herring's design is the observation from the introduction that the dominant cost in any Themis-style fairness layer is the pairwise weight matrix computation (phase 1 of \textsc{FairPropose}), and that this cost has no cross subdag data dependency. Herring exploits this by running phase 1 of multiple committed subdags concurrently on a thread pool, and serializing only the parts that actually need cross subdag agreement.

Three design choices make per subdag processing safely parallel.
First, dependency graphs are constructed \emph{post-consensus}, so the fairness processor runs only after Tusk commits a subdag, keeping all fairness work off the DAG's critical path.
Second, missing edges are resolved \emph{explicitly} via FairUpdate votes that piggyback on workers' outgoing batches and travel through Narwhal's reliable broadcast.
Each subdag's graph receives its resolutions directly rather than waiting for implicit evidence from later subdags.
Third, support counts and pairwise weights are \emph{cumulative} across committed subdags, with two complementary mechanisms ensuring that each transaction enters at most one graph despite parallel processing. A solid claim recorded synchronously at dispatch time excludes solid transactions of any in flight subdag from later snapshots, and a cumulative chain forwarded along the parallel task pipeline excludes shaded transactions of any in flight subdag from later active sets.

Figure~\ref{fig:herring_architecture} illustrates the high level architecture
across three sub-figures.
Sub-figure~\ref{fig:herring_arch_a} shows the unmodified DAG BFT layer that
commits subdags $A_1, A_2, A_3, \dots$.
Sub-figure~\ref{fig:herring_arch_b} shows the Herring per subdag pipeline that
runs after each commit.
Sub-figure~\ref{fig:herring_arch_c} shows the per subdag graph $G_r$ that the
pipeline produces and from which the per subdag transaction order is
extracted.
 
Workers assign each transaction a monotonically increasing LOI on first
observation and broadcast batches containing transactions, LOIs, and any
pending FairUpdate votes.
Primaries assemble the DAG and Tusk commits subdags exactly as in vanilla
Narwhal \& Tusk.
Each committed subdag is then handed to the fairness processor, which
performs lightweight synchronous bookkeeping on the main thread before
dispatching a four phase task onto a thread pool
(Sub-figure~\ref{fig:herring_arch_b}).
\emph{Phase~1} computes the support counts and pairwise weight matrix.
It is the dominant cost (Figure~\ref{fig:exp_cpu_time}) and runs fully in
parallel across in flight subdags.
\emph{Phase~2} awaits the cumulative $K$ chain from the prior subdag and uses
it as an active set filter before building $G_r$.
\emph{Phase~3} runs Tarjan SCC decomposition and anchor truncation, then
forwards the extended chain to the next subdag's task \emph{before} doing any
missing edge work, so that $A_{r+1}$'s Phase~2 can begin while $A_r$ is still
in Phase~4.
\emph{Phase~4} either returns a finalized order or parks the graph and
triggers a FairUpdate vote exchange.
Three synchronization points discipline this design.
First, a solid claim recorded synchronously before dispatch excludes solid
transactions of any in flight subdag from later snapshots.
Second, the cumulative $K$ chain handoff between consecutive subdags excludes
shaded transactions of any in flight subdag from later active sets.
Third, the Emit point drains ready buffers in subdag commit order, so that
the transaction order of $A_{r-1}$ is always emitted before that of $A_r$,
ensuring all correct replicas produce the same total order.
 
\subsection{Worker Batch Construction and FairUpdate Voting}
\label{sec:herring_batch}
 
Each Narwhal worker maintains a LOI tracker that assigns a monotonically increasing counter value to every transaction upon first observation, regardless of whether the transaction arrives directly from a client or inside a remote worker's batch. \footnote{Our implementation runs the LOI tracker inside a single worker per validator, which keeps the local receive order centralized and avoids cross worker LOI reconciliation. A natural extension is to dedicate one worker per validator as the \emph{OF worker} that exclusively handles LOI assignment and FairUpdate voting, while other workers retain their normal Narwhal duties. The OF worker then serves as the validator's authoritative local receive order and feeds LOIs to the rest of the system over a side channel. We leave this engineering split to future work.}
Subsequent observations of the same transaction return the previously assigned LOI.
This guarantees that each (replica, transaction) pair has a stable LOI reflecting the true first observation time, independent of the ingress path.

\begin{algorithm}[H]
\small
\caption{FairUpdate voting at the worker}
\label{alg:herring_voting}
\begin{algorithmic}[1]
\State \textbf{State:} LOI tracker $T$, unresolved edges $P[r]$, resolved votes $V[r]$
\Statex
\State \textbf{upon} $\mathit{FairPropose}(r, M_r)$ from fairness processor \textbf{do}
\For{each $(u, v) \in M_r$}
    \If{$T$ knows $\mathrm{loi}(u)$ and $\mathrm{loi}(v)$}
        \State record directed vote $(\min \!\to\! \max)$ by LOI in $V[r]$
    \Else
        \State $P[r] \gets P[r] \cup \{(u, v)\}$
    \EndIf
\EndFor
\Statex
\State \textbf{upon} new transaction $\mathit{tx}$ observed \textbf{do}
\State $T.\textsc{Record}(\mathit{tx})$
\For{each $r$ and each $(u, v) \in P[r]$ with $\mathit{tx} \in \{u, v\}$}
    \If{$T$ knows $\mathrm{loi}(u)$ and $\mathrm{loi}(v)$}
        \State record directed vote in $V[r]$, remove $(u,v)$ from $P[r]$
    \EndIf
\EndFor
\Statex
\State \textbf{upon} $P[r] = \emptyset$ for some $r$ \textbf{do}
\State \quad queue $V[r]$ into next outgoing batch, clear $V[r]$
\end{algorithmic}
\end{algorithm}
 
A batch carries three kinds of payload alongside its raw transactions.
\emph{Direct entries} pair each newly received client transaction with its LOI.
\emph{Indirect entries} pair each transaction digest learned from a remote batch with its local LOI and are propagated exactly one hop.
\emph{FairUpdate votes} contain directed edge resolutions for previously parked subdags.
 
When the fairness processor parks a subdag~$r$ with unresolved missing edges $M_r$, it sends a $\mathit{FairPropose}(r, M_r)$ message to the local worker.
Algorithm~\ref{alg:herring_voting} describes how the worker resolves each missing edge by comparing the LOIs of its two endpoints.
If both LOIs are known, the worker records a directed vote.
Otherwise, it defers the edge until the missing endpoint arrives via a future client submission or remote batch.
Once all edges in $M_r$ have been resolved, the worker queues the complete set of directed votes for subdag~$r$ into the next outgoing batch, so that edge resolutions travel through Narwhal's existing reliable broadcast without additional communication rounds.
 
\subsection{Per-Subdag Graph Construction}
\label{sec:herring_graph}

When Tusk commits a subdag $A_r$, the fairness processor first runs a short synchronous step on the main thread that maintains cumulative per replica orderings of all pending (not yet proposed) transactions, ingests $A_r$'s new contributions into them, and builds a snapshot $\{L_i\}$ for the parallel graph construction task.
Cumulative state is necessary because a naive per subdag scheme, where support counts are computed only from orderings within the current $A_r$, would fail to leverage the evidence that builds up across subdags.
Under realistic asynchrony, a transaction can be reported by enough replicas to cross the solid threshold $\tau_s$ only when their committing vertices are spread across several committed subdags.

\smallskip
\noindent\textbf{Single graph mechanism: solid claim plus cumulative chain.}
The single graph property, that each transaction enters at most one $G_r$, is the safety invariant that makes parallel execution sound.
Maintaining it across in flight tasks is the central challenge of the design, because the synchronous main thread cannot wait for any task to finish before extracting the next subdag's snapshot.
Herring uses two complementary mechanisms to handle this.

The first is a \emph{solid claim} recorded synchronously on the main thread before $A_r$'s task is dispatched. The claim is the set of solid transactions in $A_r$'s snapshot, and subsequent subdags exclude this set from their snapshots at extract time. Solid transactions are the right thing to claim eagerly because they are guaranteed to enter $G_r$ by the anchor rule of Themis' \textsc{FairPropose} algorithm~\cite{Kelkar_Cornell_Themis_Nov2022}. They have support at least $\tau_s$, so they are necessarily in some SCC at or before the anchor in topological order, regardless of the rest of the graph structure. The claim is dropped when $A_r$'s task returns to the main thread, and any transaction that actually entered $G_r$ is then promoted to permanent excluded state.

The second mechanism is a \emph{cumulative chain} that handles shaded transactions. Shaded transactions are not claimed because, unlike solids, they are not guaranteed to end up in $G_r$. A shaded transaction can have a path to a solid in $G_r$ and be retained, or it can be truncated past the anchor and need to reappear in a later subdag's snapshot. Claiming shaded transactions eagerly would either be wrong, since the truncated ones have to come back, or it would force every later subdag to wait for $A_r$'s task to finish, which would defeat the parallelism. Instead, each subdag's task forwards along the cumulative set $K_1 \cup K_2 \cup \dots \cup K_{r}$ of vertices retained in earlier graphs, and the next subdag's task uses this set as an active set filter inside its own post weight computation phase, i.e., shaded transactions of an in flight subdag $A_r$ therefore stay visible to phase 1 of $A_{r+1}, A_{r+2}, \dots$ during weight matrix construction, but they are excluded from the active set before Tarjan runs, so they never enter a second graph.

\smallskip
\noindent\textbf{Why this split.}
Splitting the single graph mechanism into a solid claim plus a cumulative chain is what lets Herring keep the weight phase fully parallel while still serializing only what has to be serialized.
Solid transactions account for the bulk of throughput at steady state, and excluding them at extract time means later subdags do not have to compute weights involving solids of in flight subdags at all.
Shaded transactions are visible to the weight phase of later subdags because their resolution depends on cross subdag evidence anyway, but the cumulative chain filter ensures they are dropped before Tarjan runs.
The result is that the most expensive operation, which Figure~\ref{fig:exp_cpu_time} identifies as the pairwise weight matrix, runs concurrently across in flight subdags, and the only synchronization point is the cumulative chain handoff that adds a single set union per subdag.

\begin{algorithm}[H]
\small
\caption{Per-subdag graph construction (parallel task)}
\label{alg:herring_graph_new}
\begin{algorithmic}[1]
\State \textbf{Input:} subdag id $r$, snapshot orderings $\{L_i\}$, oneshot receiver \textit{prior\_cumulative\_rx}, oneshot sender \textit{cumulative\_tx}
\State \textbf{Output:} finalized order or parked graph with missing edges $M_r$
\State \textbf{Thresholds:} $\tau = n(1{-}\gamma) + f + 1$, \enspace $\tau_s = n - 2f$
\Statex
\State \emph{// Phase 1: support classification and pairwise weight matrix.}
\For{each $\mathit{tx}$ appearing in any $L_i$}
    \State $c(\mathit{tx}) \gets |\{i : \mathit{tx} \in L_i\}|$
\EndFor
\State $V_r \gets \{\mathit{tx} : c(\mathit{tx}) \geq \tau\}$, \enspace
       $S_r \gets \{\mathit{tx} : c(\mathit{tx}) \geq \tau_s\}$
\For{each unordered pair $\{u, v\} \subseteq V_r$}
    \State $w_{uv} \gets |\{i : u \prec v \text{ in } L_i\}|$, \enspace
           $w_{vu} \gets |\{i : v \prec u \text{ in } L_i\}|$
\EndFor
\Statex
\State \emph{// Phase 2: cumulative chain barrier, then build $G_r$.}
\State \textit{prior\_cumulative} $\gets$ \textbf{await} \textit{prior\_cumulative\_rx}
\State $V_r \gets V_r \setminus \textit{prior\_cumulative}$
\State $G_r \gets (V_r, \emptyset)$, \enspace $M_r \gets \emptyset$
\For{each unordered pair $\{u, v\} \subseteq V_r$}
    \Comment{using cached $w$ from Phase 1}
    \If{$\max(w_{uv}, w_{vu}) \geq \tau$}
        \State add edge $(u,v)$ if $w_{uv} \geq w_{vu}$, else $(v,u)$
    \Else \enspace $M_r \gets M_r \cup \{(u, v)\}$
    \EndIf
\EndFor
\Statex
\State \emph{// Phase 3: SCC decomposition, anchor truncation, chain forward.}
\State $[C_1, \ldots, C_s] \gets \textsc{TopoSort}(\textsc{TarjanSCC}(G_r))$
\State $a \gets \max\{j : C_j \cap S_r \neq \emptyset\}$
\State truncate $G_r$ and $M_r$ to vertices in $C_1 \cup \cdots \cup C_a$
\State $K_r \gets$ vertices retained in $G_r$ after truncation
\State send $\textit{prior\_cumulative} \cup K_r$ on \textit{cumulative\_tx}
\Statex
\State \emph{// Phase 4: finalize or park.}
\If{$M_r = \emptyset$}
    \State \Return \textsc{Finalize}($G_r$, $a$) \Comment{Alg.~\ref{alg:herring_finalize_new}}
\Else
    \State park $(G_r, M_r, a)$ and send $\mathit{FairPropose}(r, M_r)$ to local worker
\EndIf
\end{algorithmic}
\end{algorithm}

When $A_r$'s task completes, transactions that entered $G_r$ are made permanent and removed from pending state, while transactions that were in $A_r$'s snapshot but did not enter $G_r$ stay in pending state and are eligible for the next subdag's snapshot.

\smallskip
\noindent\textbf{Per subdag task.}
Algorithm~\ref{alg:herring_graph_new} gives the parallel graph construction
in full.
The four phases were introduced in \S\ref{sec:herring_overview} and are
visualized in Sub-figure~\ref{fig:herring_arch_b}.

In phase 1, the task classifies each transaction by its support count $c(\mathit{tx})$, the number of distinct $L_i$ in which it appears.
Only transactions with $c \geq \tau$ are admitted to $V_r$.
A transaction with $c \geq \tau_s$ is additionally marked as solid.
For each pair of admitted vertices, the task counts how many snapshot orderings place one before the other. These pairwise counts are cached for use in phase 2.

In phase 2, after the cumulative chain barrier, the task adds a directed edge between two admitted vertices if the larger of the two cached counts reaches the shaded threshold $\tau$, and records the pair as a missing edge otherwise.
In phase 3, the task computes the strongly connected components (SCCs) of $G_r$ via Tarjan's algorithm, topologically sorts them, and identifies the \emph{anchor} as the last SCC in topological order containing at least one solid vertex. Only SCCs up to and including the anchor are retained, and $K_r$ is the set of retained vertices.
Forwarding the extended chain at this point, before missing edge handling in phase 4, is what preserves parallelism. The next subdag's phase 2 can begin as soon as $K_r$ is known, without waiting for $A_r$'s graph to finalize or park.

On result arrival, the main thread runs a synchronous step that promotes the vertices entering $G_r$ to $\mathit{proposed\_txs}$, removes them from pending state, and drops the solid claim recorded for $A_r$.
 
\subsection{FairUpdate and Order Finalization}
\label{sec:herring_finalization}
 
The fairness processor extracts FairUpdate votes from every committed subdag and routes them to the parked graph for the target subdag they reference.
Once a parked graph for subdag~$r$ has accumulated votes from $n{-}f$ distinct replicas, a finalization task is spawned on the thread pool.

\begin{algorithm}[H]
\small
\caption{FairUpdate resolution and order finalization}
\label{alg:herring_finalize_new}
\begin{algorithmic}[1]
\State \textbf{Per-replica state:}
\State \quad $\mathit{parked}[r]$: parked graph $(G_r, M_r, a)$ for subdag $r$
\State \quad $\mathit{votes}[r]$: accumulated votes, author $\to$ directed edges
\State \quad $\mathit{ready}[r]$: completed order for subdag $r$
\State \quad $\mathit{next}$: next subdag id to emit
\Statex
\State \textbf{upon} committed subdag contains FairUpdate votes \textbf{do}
\State \quad route each vote to $\mathit{votes}[r']$ by target subdag $r'$
\State \quad \textbf{for} each $r'$ with $|\mathit{votes}[r']| \geq n{-}f$ \textbf{do}
\State \quad \quad spawn \textsc{ApplyFairUpdate}($r'$) on thread pool
\Statex
\Function{ApplyFairUpdate}{$r$}
    \State $(G_r, M_r, a) \gets \mathit{parked}[r]$
    \For{each $(u, v) \in M_r$}
        \State $w_{uv} \gets |\{a : (u \!\to\! v) \in \mathit{votes}[r][a]\}|$
        \State $w_{vu} \gets |\{a : (v \!\to\! u) \in \mathit{votes}[r][a]\}|$
        \If{$\max(w_{uv}, w_{vu}) \geq \tau$}
            \State add edge $(u,v)$ if $w_{uv} \geq w_{vu}$, else $(v,u)$
        \EndIf
    \EndFor
    \State $\mathit{ready}[r] \gets$ \textsc{Finalize}($G_r$, $a$)
    \State \textsc{Emit}()
\EndFunction
\Statex
\Function{Finalize}{$G, a$}
    \State $[C_1, \ldots, C_s] \gets \textsc{TopoSort}(\textsc{TarjanSCC}(G))$
    \State $\mathit{order} \gets [\,]$
    \For{$j = 1$ to $a$}
        \State append \textsc{LinearOrder}($C_j$) to $\mathit{order}$
    \EndFor
    \State \Return $\mathit{order}$
\EndFunction
\Statex
\Function{Emit}{} \Comment{serialization point}
    \While{$\mathit{ready}[\mathit{next}]$ exists}
        \State output $\mathit{ready}[\mathit{next}]$ to application log
        \State $\mathit{next} \gets \mathit{next} + 1$
    \EndWhile
\EndFunction
\end{algorithmic}
\end{algorithm}

Algorithm~\ref{alg:herring_finalize_new} describes vote application and order finalization. \textsc{ApplyFairUpdate} tallies, for each missing edge, the number of distinct replicas that voted in each direction and adds the majority supported edge whenever the larger tally reaches the shaded threshold~$\tau$.
Because votes ride on the same reliable broadcast that carries Narwhal's transaction batches, every vote cast by a correct replica is eventually delivered to every other correct replica.
Therefore, every parked graph eventually accumulates the $n{-}f$ votes needed to trigger finalization.

\textsc{Finalize} recomputes the SCCs of the augmented graph, topologically sorts them, and linearizes each SCC up to and including the anchor.
Transactions forming a Condorcet cycle are output contiguously, while inter SCC ordering follows the topological sort, following the condensation plus unspooling approach of Themis~\cite{Kelkar_Cornell_Themis_Nov2022}.
The intra SCC linearization can be instantiated either as a Hamiltonian path through the tournament induced by the SCC or deterministically by sorting on transaction digests.
Both choices preserve $\gamma$-batch-OF since the fairness guarantee only constrains the ordering \emph{across} SCCs.
 
\textsc{Emit} is the \emph{serialization point} where ready buffers are drained in subdag commit order, emitting the order of $A_r$ only after every earlier subdag has been emitted.
This is necessary because Tusk commits $A_{r-1}$ before $A_r$, and the total transaction order must respect this commitment sequence across all correct replicas.
Graph construction (Algorithm~\ref{alg:herring_graph_new}) and vote application remain fully parallel.

\section{Correctness}
\label{sec:correctness}
This section gives an insight into the correctness of the protocol with full proofs deferred to Appendix~\ref{sec:full_proofs}.

Herring's fairness processor runs after consensus and does not modify round progression, vertex creation, or leader election. The only change to the underlying DAG is the self-referencing rule of Section~\ref{sec:narwhal_tusk}, which strengthens validity without weakening safety or liveness. Herring inherits the agreement, total order, and validity properties of Narwhal \& Tusk directly, and adds three fairness-layer guarantees on top. 

\begin{theorem}[Agreement on fair order]
\label{thm:agreement_main}
All correct replicas emit the same total transaction order.
\end{theorem}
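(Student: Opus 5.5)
The argument is a determinism argument: everything the fairness processor computes is a deterministic function of the committed subdag sequence, and that sequence is the same at every correct replica by the agreement and total order properties of Narwhal \& Tusk. I would set it up as an induction on the subdag index $r$. The claim is that every correct replica computes the same ready buffer $\mathit{ready}[r]$, and that before processing $A_{r+1}$ it holds the same cumulative state: the per-replica orderings of pending transactions, the set $\mathit{proposed\_txs}$, the solid claims, and the chain $K_1 \cup \dots \cup K_r$. Since \textsc{Emit} drains the buffers strictly in subdag index order, agreement on every $\mathit{ready}[r]$ gives agreement on the emitted log, including on each prefix emitted so far.

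\textbf{Inductive step, construction phases.} Suppose all correct replicas agree on the state after $A_1,\dots,A_{r-1}$. Total order ensures they receive the same $A_r$. The synchronous extraction on the main thread is then a deterministic function of this state and $A_r$, so the snapshot $\{L_i\}$ and the solid claim agree. This holds even though it runs before earlier tasks return: the claims and pending sets are updated only by the main thread, in commit order. Phase~1 and the thresholds $\tau, \tau_s$ are deterministic. Phase~2 awaits the chain value from $A_{r-1}$'s task, which is the same everywhere by induction, so $V_r$, $G_r$ and $M_r$ agree. Tarjan, \textsc{TopoSort} and the anchor choice are deterministic once a canonical tie-break is fixed, for example by digest. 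Hence $a$, the truncation, and $K_r$ agree.

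\textbf{Main obstacle: the parked case.} The difficulty is that the votes used by \textsc{ApplyFairUpdate} arrive asynchronously. Replica $p$ might fire on one set of $n{-}f$ authors while replica $q$ fires on a different set. I would close this gap by arguing that votes are read only from \emph{committed} subdags. The trigger is therefore evaluated at a common point of the committed sequence: the first committed subdag after which $|\mathit{votes}[r]| \geq n{-}f$. The tally should be taken from the votes present at that point, or canonically from the first $n{-}f$ distinct authors in commit order, duplicates from one author being ignored. This selection is identical at every replica, so the added edges, the result of \textsc{Finalize}, and $\mathit{ready}[r]$ coincide. If the algorithm as written lets late votes leak into a tally, I would state this canonical snapshot rule explicitly as the implementation's semantics.

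\textbf{Closing the induction.} Finally, the main-thread result step promotes the vertices of $G_r$ to $\mathit{proposed\_txs}$ and drops the claim. It acts on identical inputs, so the state after $A_r$ agrees and the induction closes. Thread scheduling affects only \emph{when} each buffer becomes ready, never its contents, and \textsc{Emit} removes the timing from the output order.
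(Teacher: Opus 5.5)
\textbf{Gap in the inductive step: the snapshots do not agree.} Your determinism strategy is the right one, but the inductive step rests on a claim that is false for the parallel execution. You assert that when $A_r$'s snapshot is extracted, every correct replica holds the same pending orderings, $\mathit{proposed\_txs}$ and solid claims, because these are ``updated only by the main thread, in commit order.''

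They are not updated only at commit events. The result step for $A_{r-1}$ runs whenever that task returns: it promotes the vertices of $G_{r-1}$ to $\mathit{proposed\_txs}$, removes them from pending state, and drops $A_{r-1}$'s claim. Whether this happens before or after $A_r$ is extracted depends on thread scheduling at each replica.

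At a replica where $A_{r-1}$ is still in flight, the shaded vertices of $K_{r-1}$ are pending and unclaimed, so they enter $A_r$'s snapshot. One of them may even have gained enough support to be solid there and be claimed by $A_r$. At a replica where $A_{r-1}$ has already returned, these vertices are absent. So the snapshots, the solid sets $S_r$ and the solid claims genuinely differ across correct replicas, and your invariant cannot be maintained. This is precisely the situation that Lemma~\ref{lem:reduction} exists to handle, where it notes that the snapshot ``may additionally contain shaded vertices belonging to in flight predecessors.''

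\textbf{How to repair it.} Carry only $G_{r'}$, $M_{r'}$, $K_{r'}$ and the chain $\bigcup_{r'<r} K_{r'}$ through the induction. Then show three things.
(i) Two correct replicas' snapshots for $A_r$ can differ only in vertices of that chain. Permanent exclusions are chain vertices, and claimed solids of in-flight predecessors are chain vertices by the anchor rule.
(ii) Support counts, and pairwise weights among vertices outside the chain, do not depend on which chain vertices appear in the $L_i$.
(iii) The Phase~2 filter removes exactly the chain vertices before edge construction, Tarjan and anchor selection, so a differing $S_r$ cannot move $a$.

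Then $V_r$, $G_r$, $M_r$ and $K_r$ agree even though the snapshots do not. The paper phrases this as equivalence with a serial reference followed by determinism of that reference, which is the same argument organized differently.

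\textbf{The parked case.} Your handling of it is sharper than the paper's. The paper simply asserts that the observed FairUpdate vote set is identical across replicas. It does not fix which votes a pooled \textsc{ApplyFairUpdate} reads while later votes are still being routed into $\mathit{votes}[r]$. Your canonical cut at the triggering committed subdag, keeping the first vote per author, is exactly the rule needed to make that step sound.
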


\begin{theorem}[$\gamma$-batch-order-fairness]
\label{thm:fairness_main}
For any two transactions $\mathit{tx}$ and $\mathit{tx}'$ received by all replicas, if $\gamma n$ replicas receive $\mathit{tx}$ before $\mathit{tx}'$, all correct replicas output $\mathit{tx}$ no later than $\mathit{tx}'$.
\end{theorem}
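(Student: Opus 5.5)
We prove Theorem~\ref{thm:fairness_main} by following the Themis fairness argument and adapting it to Herring's cumulative, per-subdag setting. Fix $\mathit{tx}$ and $\mathit{tx}'$ such that at least $\gamma n$ replicas receive $\mathit{tx}$ before $\mathit{tx}'$. By Theorem~\ref{thm:agreement_main} it suffices to reason about one correct replica's emitted order. Because \textsc{Emit} drains subdags in commit order, and because the single graph property places each transaction in at most one $G_r$, there are two cases. Either $\mathit{tx}$ and $\mathit{tx}'$ enter different graphs $G_r$ and $G_{r'}$, and we must show $r \le r'$. Or they enter the same graph, and we must show that $\mathit{tx}$'s SCC precedes or equals $\mathit{tx}'$'s SCC in the topological order used by \textsc{Finalize}.

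\textbf{Counting step.} First we establish the edge lemma: whenever the pair $\{\mathit{tx},\mathit{tx}'\}$ is decided, the decided direction is $\mathit{tx}\to\mathit{tx}'$.
\begin{itemize}
\item \emph{Decided from snapshot orderings.} Among the $n{-}f$ contributors, at most $f$ are Byzantine. Of the correct replicas, at least $\gamma n - f$ received $\mathit{tx}$ first, so at most $n(1-\gamma)$ correct replicas order $\mathit{tx}'$ first. Adding the Byzantine contributors gives $w_{\mathit{tx}'\mathit{tx}} \le n(1-\gamma)+f < \tau$, so the reverse edge can never reach threshold. Any edge added in Phase~2 therefore points $\mathit{tx}\to\mathit{tx}'$. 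For correct replicas, the self-referencing rule together with monotone LOIs guarantees that each snapshot ordering is consistent with that replica's true receive order. Without this, cumulative orderings could misreport the order.
\item \emph{Decided by FairUpdate votes.} The same count applies, because correct workers vote according to their LOIs, and the LOIs reflect true first observation.
\end{itemize}
Since $n > 4f/(2\gamma-1)$, the threshold $\tau$ is consistent with these bounds.

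\textbf{Same-graph case.} If the edge $\mathit{tx}'\to\mathit{tx}$ were present, the counting step is contradicted. So either the edge $\mathit{tx}\to\mathit{tx}'$ is present, or the two transactions share an SCC, or the pair is still missing. A missing pair is impossible at finalization: \textsc{ApplyFairUpdate} runs only after $n{-}f$ votes have arrived, and with those votes at least $\gamma n - 2f \ge \tau$ correct voters favor $\mathit{tx}$. Checking this inequality is a routine calculation from the threshold. Hence the edge is present, and the SCC of $\mathit{tx}$ precedes or equals that of $\mathit{tx}'$.

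\textbf{Cross-graph case (main obstacle).} Suppose $\mathit{tx}'$ enters $G_{r'}$ while $\mathit{tx}$ has not entered any $G_{r''}$ with $r'' \le r'$. We must derive a contradiction.
\begin{enumerate}
\item If $\mathit{tx}$ is non-blank in the $A_{r'}$ snapshot, it is in $V_{r'}$. The edge $\mathit{tx}\to\mathit{tx}'$ (or the same SCC, after resolution) gives $\mathit{tx}$ a path to the anchor's SCC, so $\mathit{tx}$ is not truncated. It therefore enters $G_{r'}$, which is a contradiction.
\item If $\mathit{tx}$ is blank, then at most $\tau - 1$ snapshot orderings contain it. Since $\mathit{tx}'$ is non-blank, at least $\tau$ orderings contain $\mathit{tx}'$. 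Among the correct replicas whose orderings contain $\mathit{tx}'$ but not $\mathit{tx}$, the cumulative, prefix-closed (by LOI) nature of the orderings forces each of them to have received $\mathit{tx}'$ before $\mathit{tx}$. The count of such replicas then exceeds the $n(1-\gamma)+f$ budget computed in the counting step, which is a contradiction.
\end{enumerate}

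The delicate part of this case is the parallel pipeline. We must argue that the solid claim and the cumulative $K$ chain only remove transactions that already belong to some earlier $G_{r''}$. Under that invariant, $\mathit{tx}$ being filtered out of $V_{r'}$ implies it was placed earlier, which is the desired conclusion rather than a counterexample. We would first prove this invariant as a separate lemma by induction on subdag index, and then apply it here.
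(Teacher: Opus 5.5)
Your counting step and same-graph case are sound and essentially match the paper. They amount to Lemma~\ref{lem:graph_directional}(3) plus the ApplyFairUpdate tally, and your $\gamma n-2f\geq\tau$ check is a useful addition.

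The gap is the cross-graph case. You try to prove that $\mathit{tx}$ never lands in a strictly later graph than $\mathit{tx}'$. That is false for Herring, as it is for Themis, because batch unspooling deliberately lets one Condorcet cycle be emitted across several subdags. Your step~2 count also does not close. From $c(\mathit{tx}')\geq\tau$ and $c(\mathit{tx})\leq\tau-1$ you only get one snapshot ordering containing $\mathit{tx}'$ but not $\mathit{tx}$. That ordering may be Byzantine, so nothing exceeds the $n(1-\gamma)+f$ budget.

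Here is a concrete instance. Take $n=5$, $f=1$, $\gamma=1$, so $\tau=2$ and $\tau_s=3$. Let the correct snapshot orderings be $L_1=[\mathit{tx},\mathit{tx}',Z]$, $L_2=[Z]$, $L_3=[x]$, and the Byzantine one be $L_4=[\mathit{tx}',Z]$. All five replicas actually received $\mathit{tx}$ before $\mathit{tx}'$. Then $Z$ is solid, $\mathit{tx}'$ is shaded and $\mathit{tx}$ is blank. We have $w_{\mathit{tx}'Z}=2$ against $w_{Z\mathit{tx}'}\leq 1$, so $\mathit{tx}'$ is retained and emitted in this subdag while $\mathit{tx}$ is emitted in a later one. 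There is no contradiction to be found.

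What is missing is the paper's Lemma~\ref{lem:contiguous_cycles}, which replaces the contradiction with a cycle. Suppose $\mathit{tx}$ lands later than $\mathit{tx}'$. Take a solid $Z$ in the anchor SCC of $\mathit{tx}'$'s graph and follow a path from $\mathit{tx}'$ to $Z$; each edge is backed by at least $n(1-\gamma)+1$ correct replicas. Lemma~\ref{lem:solid_precedes} gives $n(1-\gamma)+1$ correct replicas receiving $Z$ before $\mathit{tx}$, and the hypothesis on $\mathit{tx},\mathit{tx}'$ closes the cycle. The two transactions then share a batch of the maximal cyclic partition, which is exactly what ``no later than'' allows under batch-OF. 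In the example, replica~1 gives $\mathit{tx}'\prec Z$ and replica~2 gives $Z\prec\mathit{tx}$.

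The same repair is needed in your step~1 when $\{\mathit{tx},\mathit{tx}'\}$ is a missing pair. Herring performs anchor truncation in Phase~3 \emph{before} ApplyFairUpdate runs, and $M_r$ is truncated along with $G_r$. So a later resolution $\mathit{tx}\to\mathit{tx}'$ cannot rescue a truncated $\mathit{tx}$. That subcase must also be closed through the anchor solid rather than through the resolved edge.
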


\begin{theorem}[Liveness]
\label{thm:liveness_main}
Every transaction submitted by a correct client is eventually output by every correct replica.
\end{theorem}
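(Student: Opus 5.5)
The liveness claim follows once three things are shown: every transaction from a correct client reaches some committed subdag's snapshot with solid support, every parked graph is eventually finalized, and the \textsc{Emit} serialization point never stalls forever. We treat these in order, relying on the agreement, total order and validity properties of Narwhal \& Tusk and on Theorem~\ref{thm:agreement_main} for determinism across replicas.

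\emph{Step 1: solid support is eventually reached.} A correct client sends $\mathit{tx}$ to all replicas. Each correct replica's worker therefore assigns $\mathit{tx}$ an LOI and includes it in some batch referenced by a vertex. By validity, together with reinjection of orphaned vertices and the self-referencing rule, every correct replica's vertex carrying that LOI is eventually committed. Support counts are cumulative across subdags, so after finitely many commits the count for $\mathit{tx}$ reaches at least $n-f \geq \tau_s = n-2f$.

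\emph{Step 2: $\mathit{tx}$ enters some graph.} Consider the first subdag $A_r$ whose snapshot still contains $\mathit{tx}$ as pending once its count is at least $\tau_s$. Two cases arise.
\begin{itemize}
\item If $\mathit{tx}$ was already placed in an earlier $K_{r'}$, it lies in some $G_{r'}$ and we are done.
\item Otherwise it is solid in $A_r$'s snapshot. It is not removed by the cumulative filter, and by the anchor rule it lies in an SCC at or before the anchor, since the anchor is the last SCC containing a solid vertex. Hence $\mathit{tx}\in K_r$.
\end{itemize}
One subtlety is the solid claim. A claim by an in-flight subdag only ever excludes $\mathit{tx}$ because that subdag will itself retain it, and each claim is dropped when its task returns. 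The claim therefore never permanently hides $\mathit{tx}$.

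\emph{Step 3: every parked graph is finalized (main obstacle).} A parked $G_r$ needs $n-f$ FairUpdate votes. A correct worker defers an edge $(u,v)$ only until it knows both LOIs. The endpoints are non-blank, so each appears in at least $\tau$ orderings and hence in some committed batch. That batch is reliably broadcast, so every correct worker eventually observes it, at which point Algorithm~\ref{alg:herring_voting} records the LOI.

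The remaining gap is a transaction that reaches a worker only through a Byzantine batch. I would argue that reliable broadcast delivers any batch referenced by a committed vertex to all correct workers. Since the transaction appears in a committed ordering, its batch is available, which closes the gap. The $n-f$ correct replicas thus each resolve all of $M_r$, queue $V[r]$, and have it committed via validity. Every correct replica therefore eventually collects $n-f$ votes and spawns \textsc{ApplyFairUpdate}.

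Any edge still missing afterwards is harmless: \textsc{Finalize} simply computes SCCs and outputs everything up to the anchor. I expect this step to be the delicate one, because it depends on the LOI-availability argument and on the recorded parking set being consistent across replicas.

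\emph{Step 4: \textsc{Emit} progresses, and the pipeline never stalls.} By induction on $r$, $\mathit{ready}[r]$ is eventually set. For unparked graphs this is immediate; for parked graphs it follows from Step~3. The cumulative chain cannot block either, because Phase~3 always forwards $K_r$ before Phase~4, and the chain base is empty. Hence $\mathit{next}$ advances unboundedly, and the subdag whose graph contains $\mathit{tx}$ from Step~2 is eventually output by every correct replica.
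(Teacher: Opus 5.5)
Your proposal is correct and follows essentially the same route as the paper's proof. Both arguments go through four steps: solid support via Narwhal's validity, retention in $K_r$ by the anchor rule (Lemma~\ref{lem:graph_directional}(1)), eventual arrival of $n-f$ FairUpdate votes for parked graphs (Lemma~\ref{lem:votes_arrive}), and induction on commit order through \textsc{Emit}. Your extra remarks cover points the paper glosses over: the solid claim, the non-blocking cumulative chain, and the fact that \textsc{Finalize} does not need every missing edge resolved (where the paper instead cites Lemma~\ref{lem:graph_directional}(3) to assert full resolution).
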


The central technical step is a reduction. Herring's parallel execution emits the same total order as a serial reference in which each subdag's graph construction task, including any subsequent ApplyFairUpdate resolution, runs to completion before the next begins. The three synchronization points of Section~\ref{sec:herring_overview}, namely the synchronous solid claim at dispatch, the cumulative $K$ chain handoff between consecutive subdags, and the Emit serialization point, together suffice to enforce this equivalence. Once the reduction is in place, agreement collapses to determinism of the serial reference, which is a deterministic function of the committed subdag sequence and the observed FairUpdate vote set.

The fairness argument is borrowed from Themis~\cite{Kelkar_Cornell_Themis_Nov2022} with the substitution of the leader's collected list $L$ by the per-subdag snapshot $\{L_i^r\}$ and of the leader proposal by subdag $A_r$'s dependency graph $G_r$. Directional safety rules out wrong-direction edges in any $G_r$ and in any ApplyFairUpdate resolution. The non-trivial case is when $\mathit{tx}$ and $\mathit{tx}'$ finalize in different subdags, where we show they belong to the same Condorcet cycle. LOI monotonicity is the one place where the Themis argument needs adaptation, and it rests on the self-referencing rule and the worker's monotonic LOI assignment.

Liveness follows from Narwhal's validity together with the fact that every parked graph eventually accumulates $n-f$ FairUpdate votes, since votes ride on the same reliable broadcast that carries transaction data.

\section{Experimental Evaluation}
We evaluate Herring by comparing its performance to FairDAG-RL~\cite{Kang_FairDAG_Aug2025} and DoD-W~\cite{Nagda_UPenn_PhD_2025}, two leading batch-OF DAG BFT protocols. In addition, we also evaluate the overhead of fair ordering of Herring relative to Narwhal \& Tusk~\cite{Narwhal&Tusk_Mar2022}, its underlying DAG consensus protocol. We also benchmark against Themis~\cite{Kelkar_Cornell_Themis_Nov2022}, the leading batch-OF protocol built on top of the leader-based consensus protocol of HotStuff~\cite{hotstuff}. 

We implemented Herring from the Rust research implementation of Narwhal~\footnote{https://github.com/asonnino/narwhal} by modifying the batch structure of each Narwhal worker to allow fair transaction ordering. This included keeping track of a node's local receive order together with appending a node's local receive order position with each transaction within the batch. To allow for parallel CPU-bound global graph construction, we have utilized the thread pool implementation provided with the Tokio~\footnote{https://tokio.rs} Rust library.

For comparing against DoD, we utilize its released code\footnote{https://github.com/HeenaNagda/DoD} with modifications. Upon running the code, we noticed that it differs from the protocol specification of its accepted conference paper. For this reason, the code was adjusted to conform to the protocol description pseudo-code. However, by doing so we spotted a liveness bug within the protocol description itself concerning the implicit edge update mechanism (Appendix~\ref{liveness_attacks}). Furthermore, due to the architectural design of DoD, there is a probability that implicitly added edges might need to be discarded after consensus. This makes the implicit missing edge resolution fix non-intuitive. For this reason, we have chosen to extend the implementation with explicit edge resolving, similar to how Themis~\cite{Kelkar_Cornell_Themis_Nov2022} and Herring handle missing edges.

DoD builds on Rashnu's~\cite{Nagda_UPenn_Rashnu_May2024} data-dependent order-fairness notion, which enforces fair ordering only between transactions that access overlapping data objects, with the original DoD evaluation reporting on two variants, DoD-R (read heavy) and DoD-W (write heavy). The latter behaves similarly to Themis, i.e., assumes dependency between each transaction. Herring, FairDAG-RL, and Themis do not implement data-dependent fairness and order all transactions through the fairness layer regardless of data dependency. Therefore, in our evaluation we use DoD-W, which is the configuration that actually exercises DoD's fairness layer on every transaction. We note that data-dependent fairness is an orthogonal optimization that could be layered onto any of the four protocols. As such we do not explore this direction in this paper and leave it for future work.

The released implementation of FairDAG-RL~\cite{Kang_FairDAG_Aug2025} is built on top of ResilientDB~\cite{resilientdb}, a framework that differs to runtimes of both Herring and DoD-W, which makes the apples to apples comparison between protocols non-trivial. Instead, we have implemented a prototype of the FairDAG-RL protocol starting from the pseudo-code description\footnote{https://github.com/randomUserGithub123/narwhal/tree/fairdag}. The main difference lies in the underlying DAG structure. The original FairDAG-RL protocol assumes a DAG with \textit{weak edges}, which lets it count $f$ additional contributions per subdag and in turn reach the weaker fault tolerance threshold of $n \geq 3f{+}1$ for $\gamma = 1$. Since Herring, DoD-W, and Narwhal \& Tusk all operate on the weak-edge free Narwhal, running the original FairDAG-RL directly would compare protocols across two different DAG structures at once, which would conflate the effect of the DAG with the effect of the fairness layer. We therefore ported FairDAG-RL to run on plain Narwhal, dropping the extra $f$ contributions that weak edges would provide and raising its threshold to match the other protocols ($n \geq 4f{+}1$ for $\gamma = 1$). This gives a fair comparison of fairness layers on the same DAG substrate. We note that this comes at a cost for FairDAG-RL, since its original design targets a weaker threshold. On the other hand, running on plain Narwhal also removes the latency penalty that weak edges impose by waiting for slow replicas, so the performance numbers we report for FairDAG-RL in this paper can be read as an upper bound on what its fairness layer can achieve. Interestingly, by implementing FairDAG-RL from scratch given its protocol description, we have noticed a liveness bug which can be turned into a liveness attack as discussed in Appendix~\ref{liveness_attacks}.

When it comes to the Themis~\cite{Kelkar_Cornell_Themis_Nov2022} protocol, we have utilized the baseline implementation from the authors of Rashnu~\cite{Nagda_UPenn_Rashnu_May2024}~\footnote{https://github.com/HeenaNagda/Themis\_tx}, as the original released Themis source code is not complete. We have further adjusted the codebase by allowing for transaction size parameter control and have changed client broadcast to mimic the Narwhal client broadcast approach. 

\begin{figure*}[t]
    \centering
    \begin{subfigure}[t]{0.48\textwidth}
        \centering
        \includegraphics[width=\linewidth]{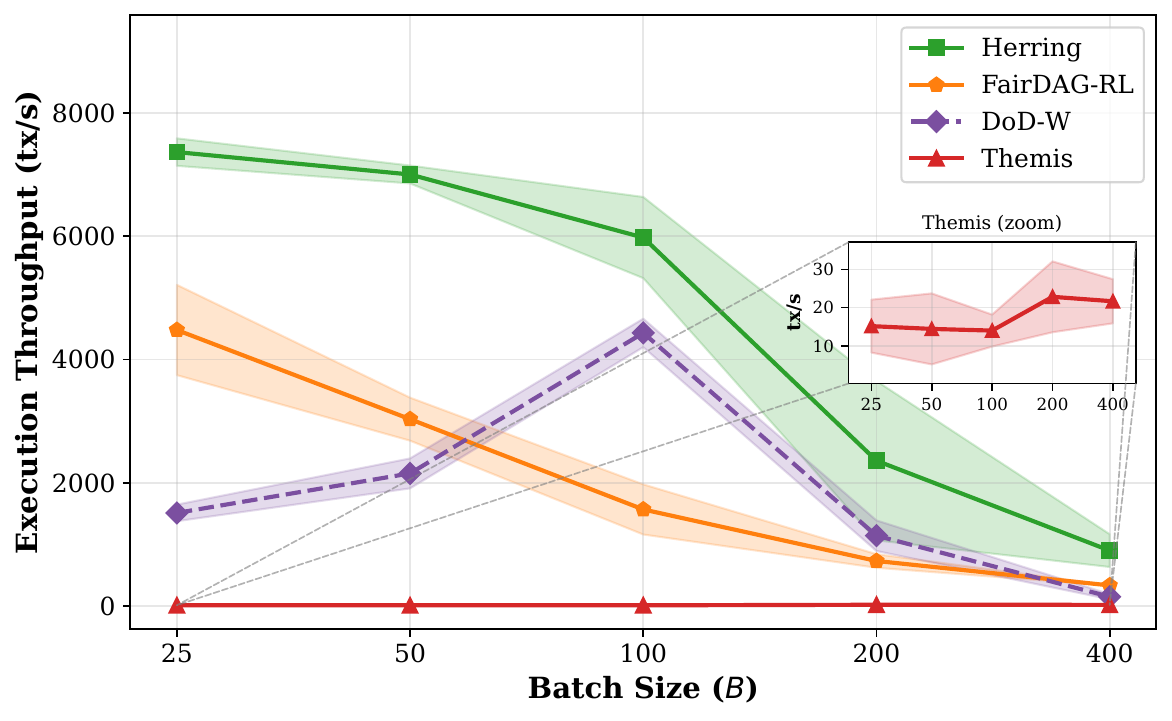}
        \label{fig:exp_5_2_3_throughput}
    \end{subfigure}\hfill
    \begin{subfigure}[t]{0.48\textwidth}
        \centering
        \includegraphics[width=\linewidth]{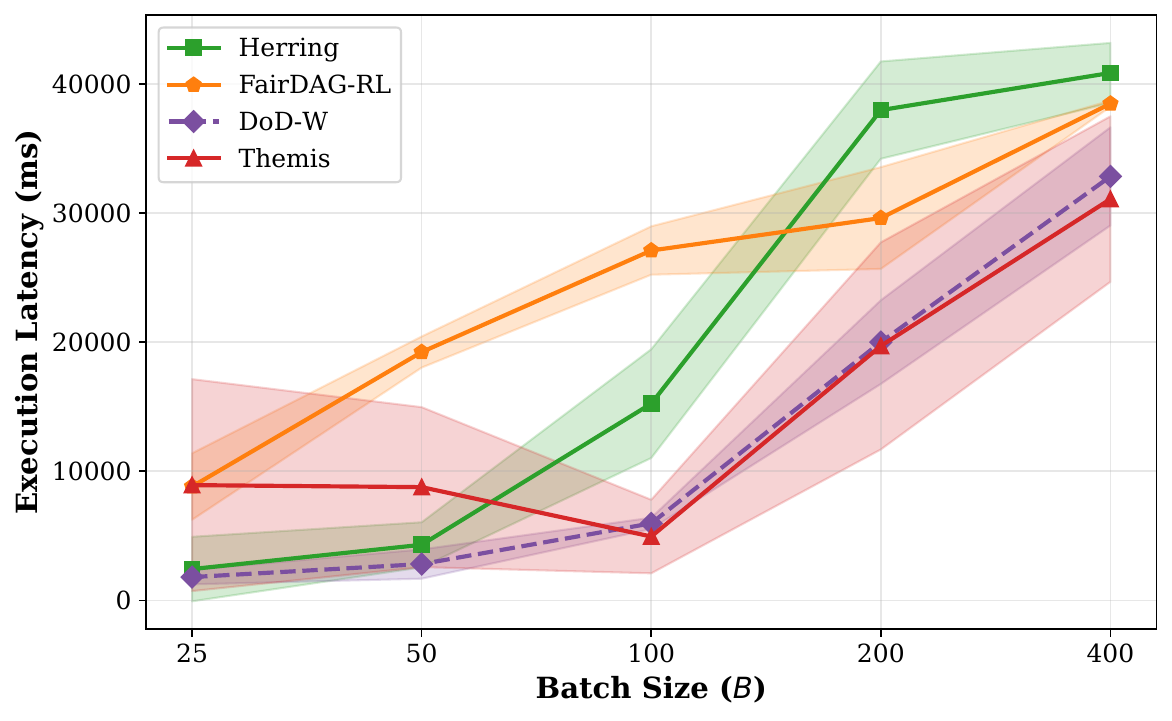}
        \label{fig:exp_5_2_3_latency}
    \end{subfigure}
    \caption{Impact of the batch size ($B$) parameter on performance of Herring, FairDAG-RL, DoD-W and Themis at an input rate of 7{,}000\,tx/s with $N=13$, $f=3$ and $\gamma = 1.0$.}
    \label{fig:exp_5_2_3}
\end{figure*}

\subsection{Setup}

We run our experiments on the DAS5~\cite{das5} academic cluster where each cluster node consists of two 8-core processors clocked at 2.4\,GHz and 64\,GB of RAM DDR3 memory. Nodes are interconnected via InfiniBand (IB) and Gigabit Ethernet (GbE). Each experiment run lasted for 60 seconds with each reported result being the average of at least 5 runs. Experiments mainly measure end-to-end (execution) throughput and latency of the system.

Each Narwhal node consists of one primary process together with one worker process and each Narwhal node was assigned a separate machine, i.e., primary and worker processes of the same node were collocated on the same machine. When it comes to the main Narwhal parameters, we have chosen a fixed size of $128$ bytes for each transaction, $4000$ bytes for each batch (around $32$ transactions per batch) and $512$ bytes for the block header size (around $16$ batches per block). 

We spawn $N$ clients, where $N$ is the number of nodes and each client connects to every node. Thus, every replica receives transactions from multiple clients in parallel, ensuring a saturated and uniformly distributed input load across the system. Each client submits transactions in bursts with a precision of $20$, meaning the target send rate is divided into $20$ equally-spaced bursts per second (one burst every $50$\,ms). To avoid duplicate transactions across clients, each client initializes its transaction counter to a random $64$-bit value and shuffles the order of its outgoing connections on every burst, so that no two clients submit overlapping transaction identifiers and no replica is favored. We assign a separate compute node per each group of four clients and the same client setup is used across all evaluated protocols to ensure a fair comparison.

\subsection{Different Batch Sizes}
We first evaluate the impact of the batch size parameter on the performance of Themis, DoD-W, FairDAG-RL and Herring. The number of replicas is set to $N=13$, the batch-OF parameter $\gamma$ is set to $\gamma=1.0$ and the input rate is set to 7{,}000\,tx/s. We set the number of faulty replicas $f$ to the maximum allowed given the constraints, thus we have $f=3$ replicas that silently crash. Similarly to previous batch-OF protocols, we compare the protocols across the following batch sizes: $\{25, 50, 100, 200, 400\}$. The batch size value represents the number of transaction orderings present within a single block.

\begin{figure*}[b]
    \centering
    \begin{subfigure}[t]{0.48\textwidth}
        \centering
        \includegraphics[width=\linewidth]{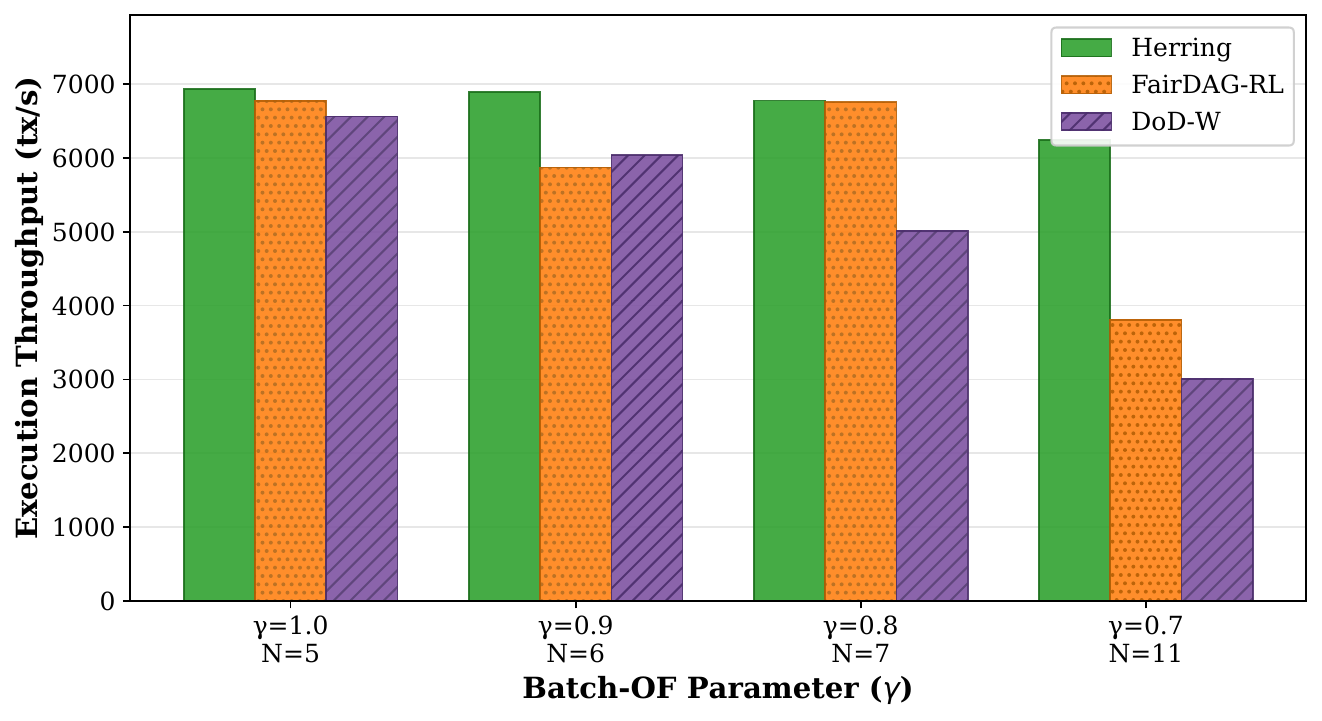}
        \label{fig:exp_5_2_4_throughput}
    \end{subfigure}\hfill
    \begin{subfigure}[t]{0.48\textwidth}
        \centering
        \includegraphics[width=\linewidth]{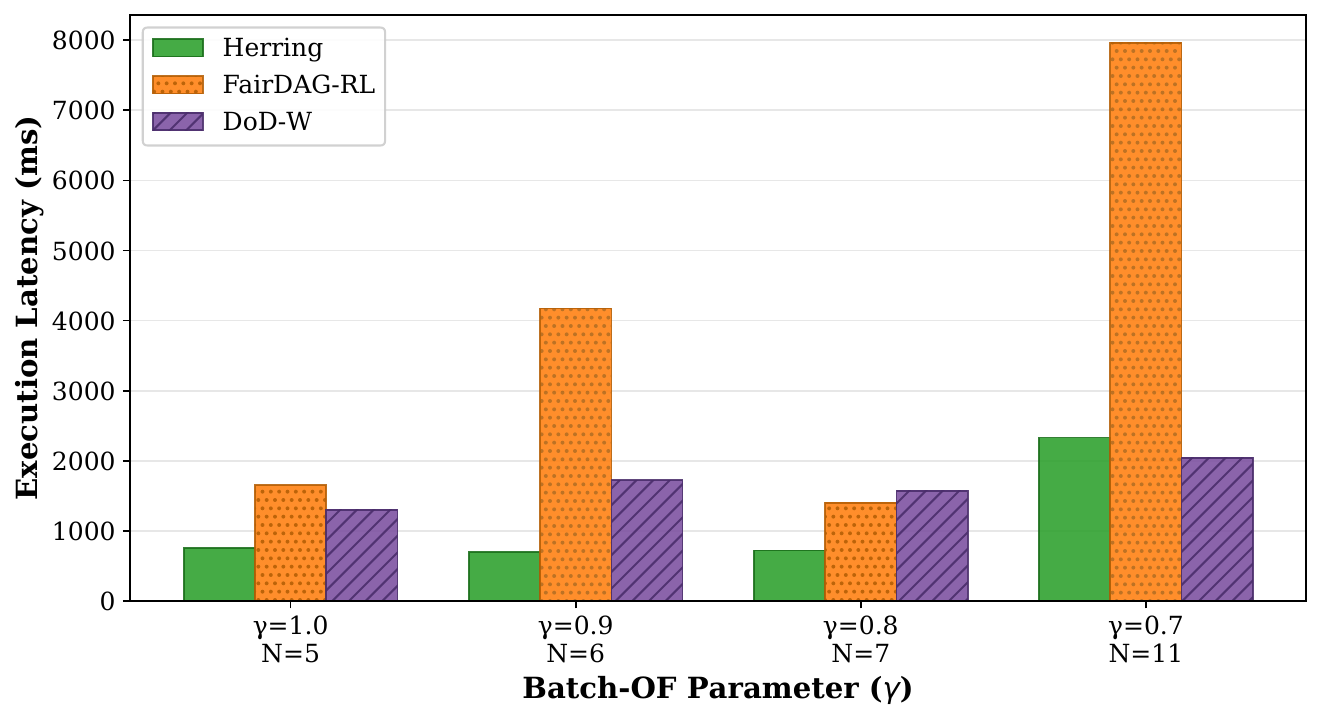}
        \label{fig:exp_5_2_4_latency}
    \end{subfigure}
    \caption{Impact of the batch-OF parameter $\gamma$ on performance of Herring, FairDAG-RL and DoD-W at an input rate of 7{,}000\,tx/s with $f=1$.}
    \label{fig:exp_5_2_4}
\end{figure*}

In leader-based protocols such as Themis, consensus occurs immediately after receiving the first block from each of the $N-f$ replicas, meaning the computational complexity of global graph construction is directly tied to the batch size value, i.e., a batch size of $B$ yields a global graph built from exactly $(N-f) \cdot B$ transaction orderings per consensus. In contrast, in DAG-based protocols such as DoD-W, FairDAG-RL and Herring, a single consensus instance can commit multiple blocks per replica at once and thus the effective number of transaction orderings fed into global graph construction is in addition bounded by the depth of the committed sub-DAG. As a result, for the same batch size value, DAG-based protocols generally process a larger number of transaction orderings per consensus instance than leader-based ones, which needs to be kept in mind when interpreting the results.

Figure~\ref{fig:exp_5_2_3} shows that Herring consistently achieves the highest execution throughput across all evaluated batch sizes, sustaining close to the offered load of $7{,}000$\,tx/s for $B \leq 50$ and degrading gracefully as $B$ grows. FairDAG-RL peaks at the smallest batch size with roughly $4{,}500$\,tx/s and degrades monotonically thereafter, while DoD-W exhibits a markedly different, non-monotonic profile, rising from about $1{,}500$\,tx/s at $B=25$ up to a peak of roughly $4{,}400$\,tx/s at $B=100$ before collapsing at larger batch sizes. This behavior stems from the DoD-W architecture, where a global order graph is constructed \emph{prior} to consensus at every DAG round, effectively adding an extra global-ordering step on top of the underlying DAG BFT protocol, i.e., at small $B$ the fixed per-round cost of this step dominates and caps throughput, while at $B=100$ it is better amortized across more transaction orderings, yielding the observed peak. Themis is effectively off-scale in comparison, sustaining only around $8$--$30$\,tx/s across the entire range (see inset), which reflects the fundamental throughput gap between leader-based and DAG-based batch-OF designs. The latency results show that Herring and DoD-W achieve the lowest execution latencies at small batch sizes (both around $2$\,s at $B=25$), whereas both FairDAG-RL and Themis already incur close to $10$\,s at the same batch size point. As $B$ grows, all protocols see their latency increase substantially which is consistent with our earlier observation that batch-OF protocols feed a larger number of transaction orderings into global graph construction per consensus instance as $B$ grows, thus increasing computation time. Overall, these results indicate that Herring offers the best throughput-latency trade-off in the small-to-moderate batch size regime ($B < 100$).

\subsection{Varying the Batch-Order-Fairness Parameter}
In this experiment, we analyze the impact of the batch-OF parameter $\gamma$ on performance of the fairness DAG BFT protocols. The network size $N$ and the batch-OF parameter $\gamma$ are correlated with the following formula: $N > \frac{4f}{2\gamma - 1}$, thus by decreasing $\gamma$ the network size $N$ increases. We make $f=1$ and we vary $\gamma$ to be: $\gamma=1.0\text{ }(N=5)$, $\gamma=0.9\text{ }(N=6)$, $\gamma=0.8\text{ }(N=7)$ and $\gamma=0.7\text{ }(N=11)$. For each protocol we select the batch size yielding the best performance based on the results of the previous experiment, thereby we have $B=25$ for Herring and FairDAG-RL and $B=100$ for DoD-W. The input rate is kept at $7{,}000$\,tx/s.

Figure~\ref{fig:exp_5_2_4} illustrates the throughput and latency across Herring, FairDAG-RL and DoD-W for different $\gamma$ values. At $\gamma \in \{1.0, 0.9, 0.8\}$, Herring consistently tracks the offered load of $7{,}000$\,tx/s, sustaining roughly $6{,}900$\,tx/s across all three settings at sub-second latency (around $700$--$750$\,ms). FairDAG-RL reaches comparable throughput at $\gamma=1.0$ and $\gamma=0.8$ (around $6{,}800$\,tx/s), but with significantly higher variance and a latency spike to roughly $4.2$\,s at $\gamma=0.9$. DoD-W remains close to the offered load at $\gamma=1.0$ but begins to fall behind at $\gamma=0.8$, dropping to $5{,}000$\,tx/s.

The gap opens clearly at $\gamma=0.7$ ($N=11$), where Herring sustains around $6{,}200$\,tx/s at $2.3$\,s of execution latency, while FairDAG-RL drops to $3{,}800$\,tx/s at $8.0$\,s and DoD-W drops to $3{,}000$\,tx/s at $2.0$\,s. This corresponds to a $64\%$ throughput improvement over FairDAG-RL and a $108\%$ improvement over DoD-W, together with $71\%$ lower latency than FairDAG-RL. DoD-W attains comparable latency to Herring at $\gamma=0.7$, but only at roughly half the throughput. This widening gap at stricter fairness settings highlights the scalability advantage of Herring's design, which absorbs the growth of $N$ substantially better than both baselines due to its intrinsic parallelism.

\subsection{Increasing Input Rates}
This experiment measures the performance of the protocols by increasing the transaction input rate into the system. The intention behind the experiment is to determine the saturation points for each protocol, i.e., points with maximum achievable input rate after which the protocol stagnates in processing transactions and latency starts to sharply increase. We choose $N=13$, $\gamma=1.0$ and set $f=3$ replicas to silently crash. In addition to the comparison between fairness DAG BFT protocols, we include the underlying DAG consensus protocol of Narwhal \& Tusk, together with the state-of-the-art batch-OF leader-based Themis protocol. Similarly to the previous experiment, per protocol we select the batch size yielding best performance. 

\begin{figure*}[b]
    \centering
    \begin{subfigure}[t]{0.48\textwidth}
        \centering
        \includegraphics[width=\linewidth]{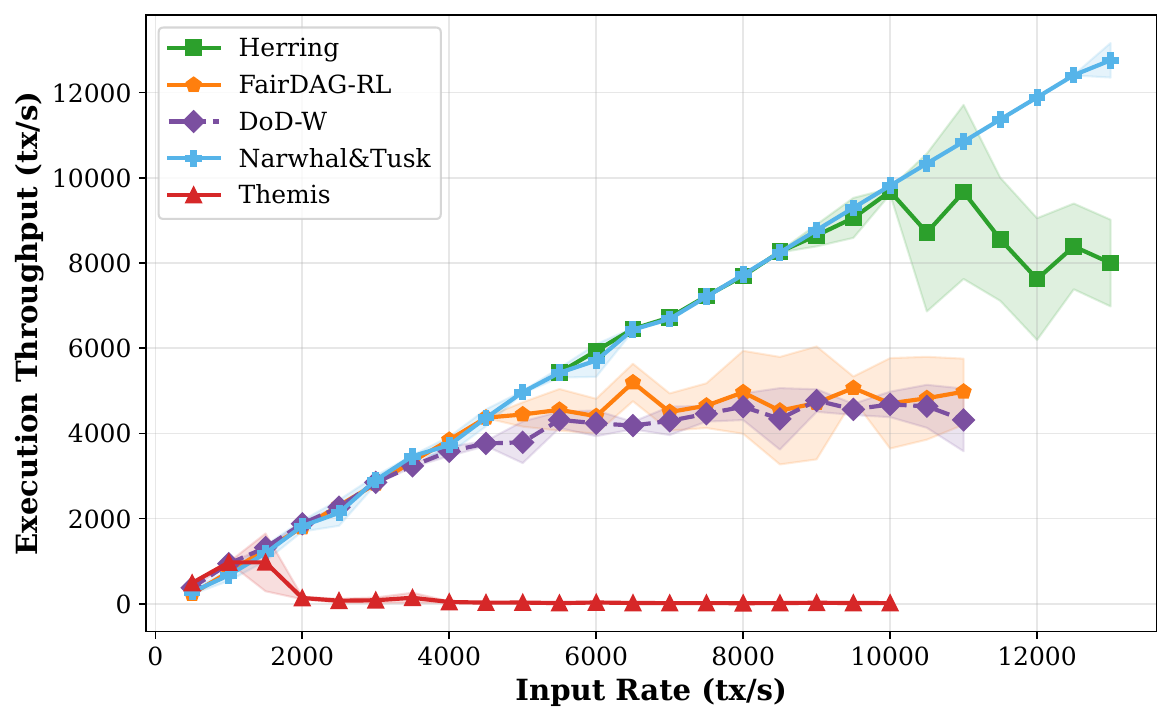}
        \label{fig:exp_5_3_1_throughput}
    \end{subfigure}\hfill
    \begin{subfigure}[t]{0.48\textwidth}
        \centering
        \includegraphics[width=\linewidth]{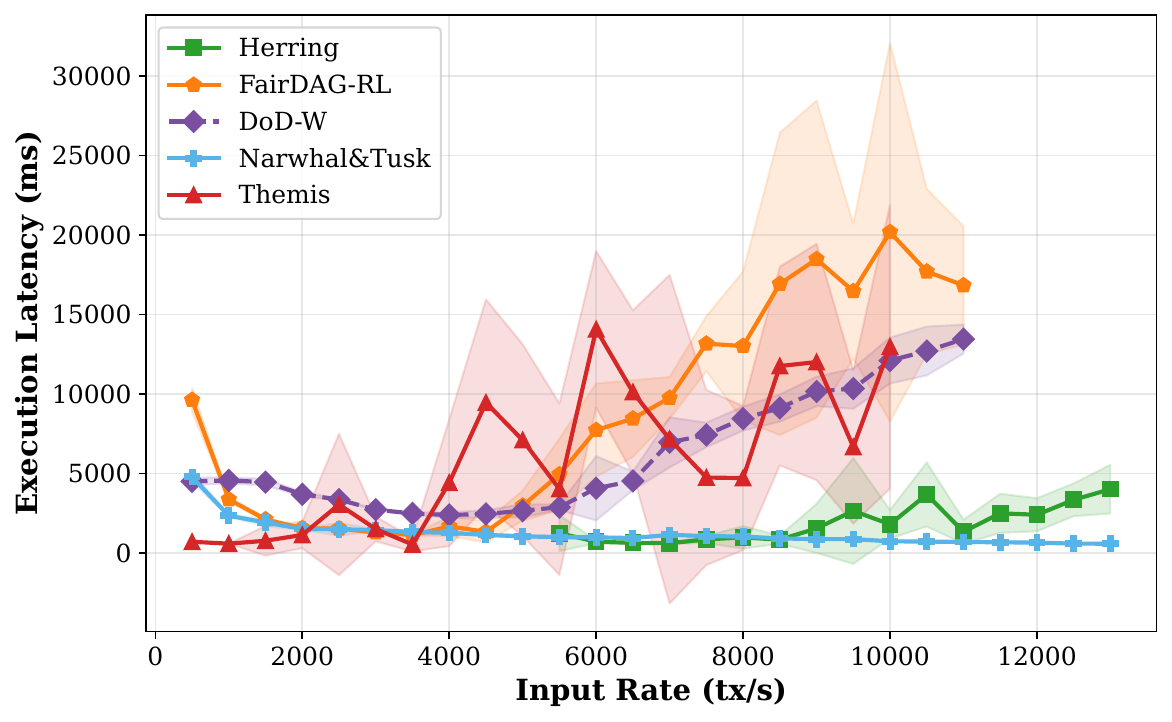}
        \label{fig:exp_5_3_1_latency}
    \end{subfigure}
    \caption{Impact of the client input rate on performance of Herring, FairDAG-RL, DoD-W, \textit{Narwhal\&Tusk} and Themis at with $N=13$, $f=3$ and $\gamma=1.0$.}
    \label{fig:exp_5_3_1}
\end{figure*}

\begin{figure*}[t]
    \centering
    \begin{subfigure}[t]{0.48\textwidth}
        \centering
        \includegraphics[width=\linewidth]{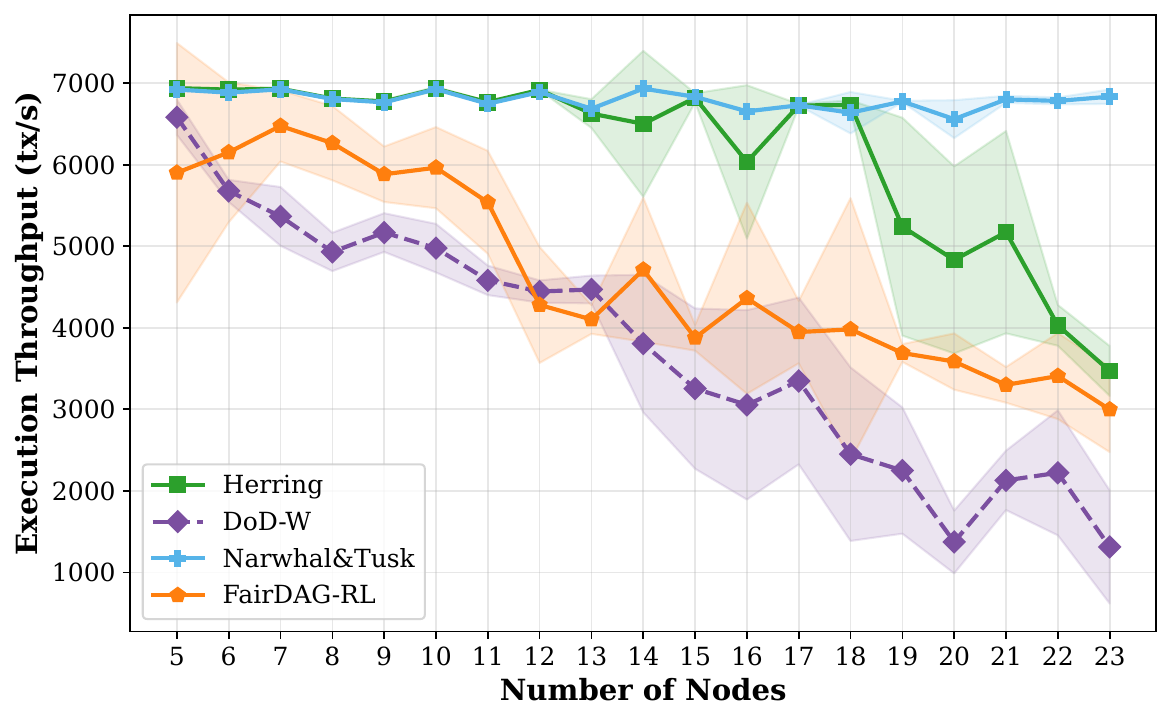}
        \label{fig:exp_5_3_2_throughput}
    \end{subfigure}\hfill
    \begin{subfigure}[t]{0.48\textwidth}
        \centering
        \includegraphics[width=\linewidth]{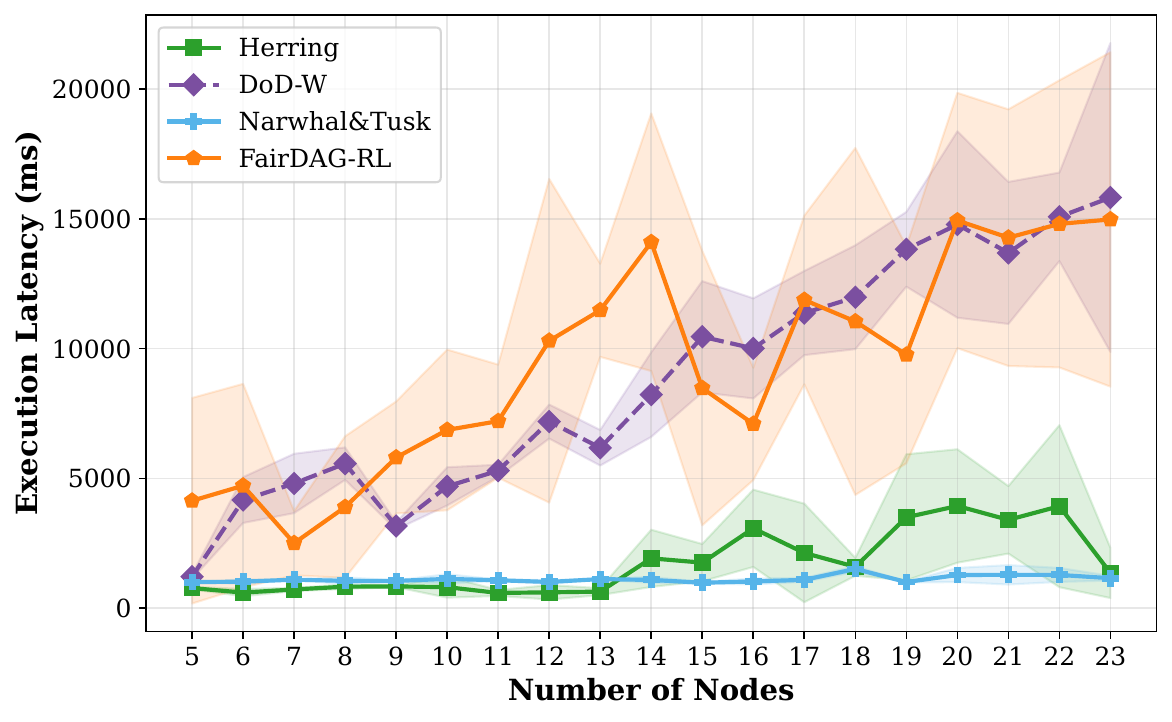}
        \label{fig:exp_5_3_2_latency}
    \end{subfigure}
    \caption{Impact of the network size $N$ on performance of Herring, FairDAG-RL, DoD-W and \textit{Narwhal\&Tusk} at an input rate of 7{,}000\,tx/s with $\gamma=1.0$.}
    \label{fig:exp_5_3_2}
\end{figure*}

Figure~\ref{fig:exp_5_3_1} illustrates the throughput and latency of all five protocols as the input rate is increased from $500$\,tx/s up to $13{,}000$\,tx/s. Most notably, Herring tracks the throughput curve of the underlying Narwhal \& Tusk protocol up to roughly $10{,}000$\,tx/s, at which point it peaks at around $9{,}600$\,tx/s before entering a noisy regime with a gradual decline. This demonstrates that the parallel design nature of the fair ordering layer of Herring introduces minimal overhead on top of its underlying DAG consensus. In contrast, both FairDAG-RL and DoD-W saturate much earlier, at around $5{,}000$\,tx/s and $4{,}500$\,tx/s respectively, after which their throughput flattens and their execution latency climbs sharply, reaching close to $18$--$20$\,s for FairDAG-RL and $13$\,s for DoD-W at the highest offered rates. Themis saturates the earliest, already around $1{,}500$\,tx/s, after which its throughput collapses to effectively zero and its latency starts increasing, confirming the fundamental scalability gap between leader-based and DAG-based batch-OF designs. When looking at saturation points, Herring achieves roughly $90\%$ higher throughput than FairDAG-RL, roughly $100\%$ higher throughput than DoD-W, and more than an order of magnitude higher throughput than Themis, while simultaneously sustaining the lowest execution latency among all fairness protocols across the entire evaluated range. These results confirm that Herring is the only evaluated fair ordering protocol that is able to keep up with the performance of its underlying DAG consensus protocol, making fair ordering essentially a near-free property in its design.

\subsection{Different Network Sizes}
In this set of experiments we measure the performance of the DAG BFT protocols by varying the network size, i.e., we keep the input rate fixed at $7{,}000$\,tx/s and vary the number of nodes from 5 until 23, thereby also adjusting the fault threshold $f$ from 1 to 5. Faulty nodes are silently crashing. We keep the batch-OF parameter $\gamma$ as $\gamma=1.0$ and again use the optimal batch size per each protocol. We exclude Themis from this set of experiments as its performance compared to DAG BFT protocols has been evaluated sufficiently, i.e., it is expected that it performs much worse than other protocols.

\begin{figure*}[b]
    \centering
    \begin{subfigure}[t]{0.48\textwidth}
        \centering
        \includegraphics[width=\linewidth]{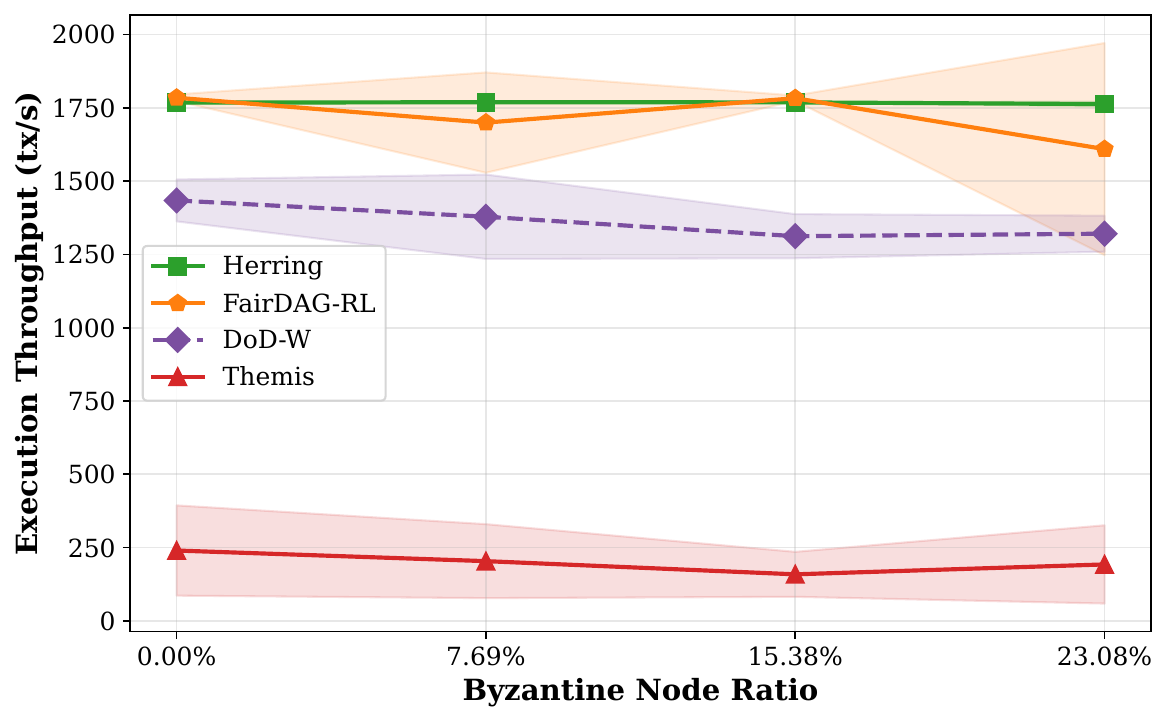}
        \label{fig:exp_faulty_throughput}
    \end{subfigure}\hfill
    \begin{subfigure}[t]{0.48\textwidth}
        \centering
        \includegraphics[width=\linewidth]{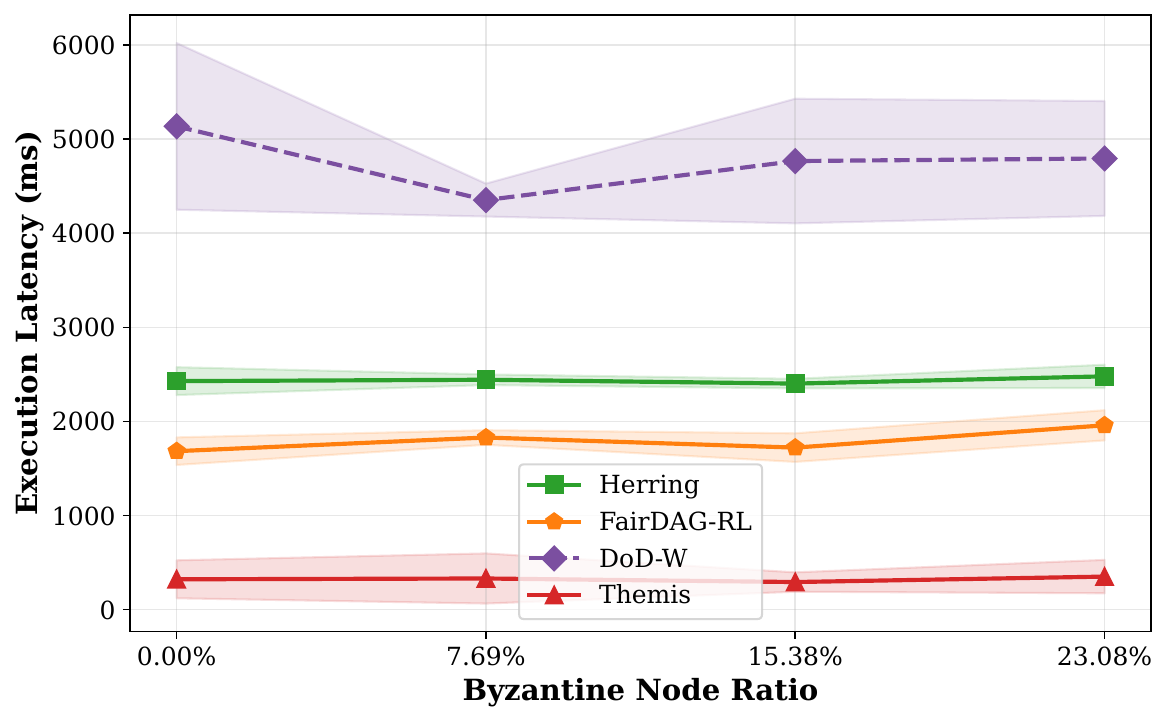}
        \label{fig:exp_faulty_latency}
    \end{subfigure}
    \caption{Impact of different ratios of Byzantine nodes on performance of Herring, FairDAG-RL, DoD-W and Themis at an input rate of 2{,}000\,tx/s with $N=13$ and $\gamma=1.0$.}
    \label{fig:exp_faulty}
\end{figure*}

Figure~\ref{fig:exp_5_3_2} shows how the throughput and latency of fairness DAG BFT protocols degrade with the increase in the number of nodes, in contrast to the underlying DAG BFT protocol of Narwhal \& Tusk, which sustains a flat throughput and latency. This is expected due to the increase in communication overhead and computational costs, as each global graph construction computational cost is correlated with the amount of local receive orderings ($N-f$ factor). Most notably, Herring closely matches the throughput of Narwhal \& Tusk up to $N=18$, sustaining close to the offered $7{,}000$\,tx/s, and only starts to visibly degrade beyond $N=18$, eventually settling at around $3{,}500$\,tx/s for $N=23$. Its latency remains within a few seconds across the entire evaluated range, peaking at around $4$\,s at $N=20$ before dropping back. In contrast, both FairDAG-RL and DoD-W degrade from the start, with FairDAG-RL dropping from around $5{,}900$\,tx/s at $N=5$ to roughly $3{,}000$\,tx/s at $N=23$, while DoD-W degrades even more aggressively, falling from $6{,}500$\,tx/s down to around $1{,}300$\,tx/s over the same range. Their latencies grow correspondingly, reaching $15$\,s and $16$\,s respectively at $N=23$, i.e., roughly an order of magnitude higher than Herring at the same network size. At $N=18$, Herring achieves around $70\%$ higher throughput than FairDAG-RL and over $175\%$ higher throughput than DoD-W, while sustaining roughly $85\%$ lower latency than both. These results confirm that Herring scales substantially better with network size than the existing fairness DAG BFT protocols.

\subsection{Different Ratios of Byzantine Nodes}
This experiment measures the performance of all fairness DAG BFT protocols together with the leader-based protocol of Themis under different ratios of Byzantine nodes. Byzantine nodes perform a simple yet effective attack where they reverse the local ordering of transactions. We fix the network size at $N=13$, keep the batch-OF parameter $\gamma$ at $\gamma=1.0$ and use the optimal batch size per protocol. We set the unique transaction input rate to the system to be $2{,}000$\,tx/s, as this is the point before the saturation of Themis, which thus allows us to fairly compare all protocols. We vary the ratio of Byzantine nodes as $f=0\text{ }(0\%)$, $f=1\text{ }(7.69\%)$, $f=2\text{ }(15.38\%)$ and $f=3\text{ }(23.08\%)$.

Figure~\ref{fig:exp_faulty} shows the throughput and latency of all four protocols under varying Byzantine ratios. Overall, all protocols remain relatively stable across the evaluated range, with only minor fluctuations in both metrics, which confirms that the attack primarily affects the ordering of transactions rather than raw performance. Herring and FairDAG-RL deliver the highest throughput, both sustaining close to $1{,}750$\,tx/s across all ratios, followed by DoD-W at around $1{,}300$--$1{,}450$\,tx/s, while Themis remains effectively saturated at around $200$\,tx/s. Among the DAG-based protocols, Herring exhibits the smallest throughput reduction as the Byzantine ratio grows, dropping by less than $1\%$ between $0\%$ and $23.08\%$, compared to roughly $9\%$ for FairDAG-RL and $8\%$ for DoD-W. Interestingly, FairDAG-RL achieves a slightly lower execution latency than Herring at this operating point. This is because at $2{,}000$\,tx/s neither protocol is close to its saturation point, and FairDAG-RL's implicit edge update mechanism avoids the extra communication overhead that Herring's explicit edge resolution incurs, while Herring's parallel graph construction advantage only materializes at higher input rates, as shown in previous experiments. Themis is an outlier in terms of latency, sustaining around $300$\,ms as it operates below its saturation point at the fixed input rate, but at the cost of an order of magnitude lower throughput than the DAG-based protocols. These results confirm that Herring remains robust under Byzantine behavior, with its throughput essentially unaffected.

\subsection{Reversing Order Attack Resistance}
In this experiment, we evaluate the robustness of Herring against adversarial reordering compared to Themis, following the methodology of Kelkar et al.~\cite{Kelkar_Cornell_Themis_Nov2022}. We choose Themis as the comparison baseline, as it is the state-of-the-art leader-based batch-OF protocol, and our goal here is to highlight the robustness advantage of DAG-based batch-OF protocols due to their underlying data dissemination design. Each Byzantine replica reverses its local ordering before reporting it. As a closeness measure between two transactions we use $\text{Dist}(\textit{tx}, \textit{tx}') = |\#(\textit{tx} \prec \textit{tx}') - \#(\textit{tx}' \prec \textit{tx})|$, where small $\text{Dist}$ indicates a fragile pair. Concretely, $\text{Dist}$ ranges from $N \bmod 2$ up to $N$ in steps of $2$, where $\text{Dist}=N$ means that all nodes agree on the receive order of the pair, and for each $\text{Dist}$ bucket the y-axis reports the fraction of transaction pairs whose final total ordering disagrees with the honest majority receive order. We fix $N=21$, $f=5$ and vary $f_{\text{actual}} \in \{0, 2, 5\}$.

\begin{figure}[H]
    \centering
    \includegraphics[width=\linewidth]{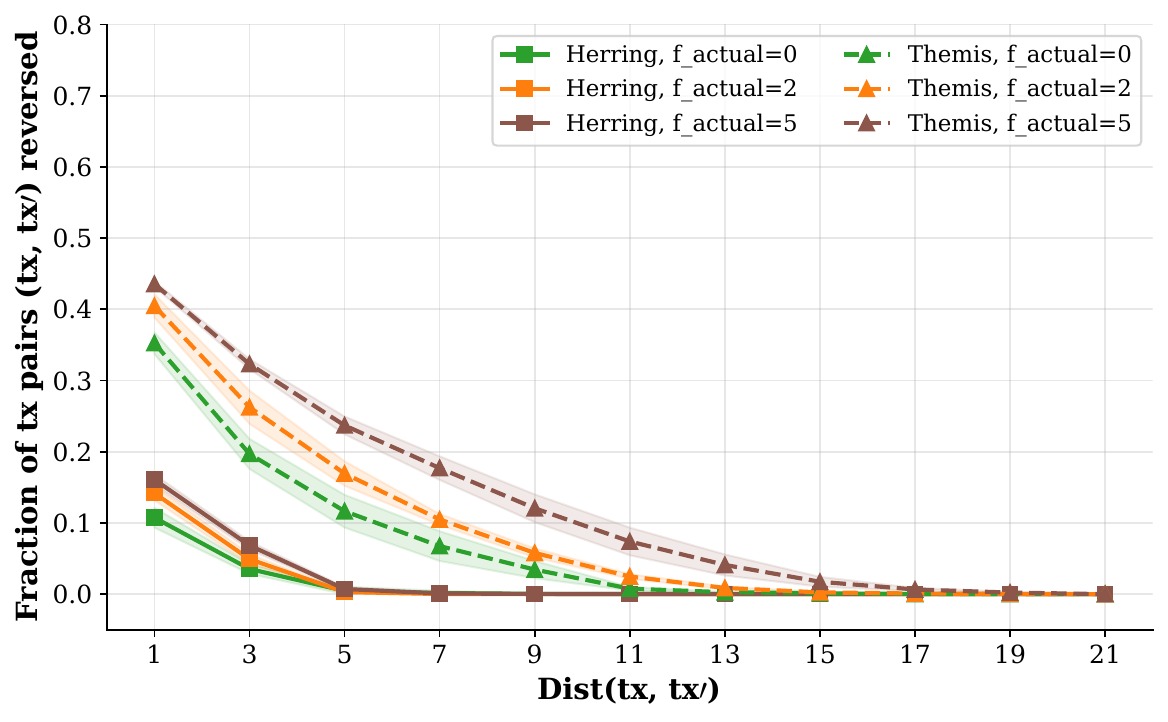}
    \caption{Robustness against the reversing order adversarial attack for Herring and Themis with $N=21$, $f=5$, and varying the actual number of Byzantine replicas $f_{\text{actual}} \in \{0, 2, 5\}$.}
    \label{fig:exp_adv_reorder}
\end{figure}

Figure~\ref{fig:exp_adv_reorder} shows that Herring consistently outperforms Themis in resilience across all settings. At $\text{Dist}=1$, Herring reverses only $\sim\!11\%$--$16\%$ of pairs across all $f_{\text{actual}}$ values, whereas Themis reverses $\sim\!35\%$--$44\%$. More importantly, the fraction of reversed pairs in Herring drops to effectively zero already at $\text{Dist}=5$ regardless of $f_{\text{actual}}$, while Themis only converges to zero at $\text{Dist} \geq 15$ and exhibits a long tail in between. The gap between the three $f_{\text{actual}}$ curves is also markedly smaller for Herring, indicating that the additional leverage an adversary gains from controlling more replicas is much more limited. Similar to FairDAG~\cite{Kang_FairDAG_Aug2025} results, the results substantiate the claim that DAG-based batch-OF protocols structurally mitigate adversarial ordering manipulation due to their inherent reliable dissemination layer.

\section{Related Work}

\noindent\textbf{Order-fairness protocols.}
The notion of $\gamma$-batch-order-fairness was introduced by Kelkar et al.~\cite{Kelkar_Cornell_Aequitas_Aug2020} with Aequitas, which however only provides weak-liveness due to arbitrarily chained Condorcet cycles.
Themis~\cite{Kelkar_Cornell_Themis_Nov2022} restored standard liveness through batch unspooling and deferred ordering and serves as the direct ancestor of the leader-based fair-ordering line, which subsequently evolved along axes of data-dependent fairness~\cite{Nagda_UPenn_Rashnu_May2024}, pipelining with consensus~\cite{Mu_China_SpeedyFair_Jan2024}, weight-based sorting~\cite{Chen_China_Auncel_Aug2024}, position fairness~\cite{Wang_China_Dikaios_Oct2025}, cross-round graph maintenance~\cite{Ren_Melbourne_AUTIG_Oct2025}, bounded unfairness~\cite{Kiayias_Edinburgh_Taxis_2024}, and minimal batch variants~\cite{Ramseyer_Stanford_StreamingSocial_Oct2024}. Such work is orthogonal to Herring and we leave its evaluation as future work.
Some approaches replace local orderings with median or inferred timestamps~\cite{Zhang_Cornell_Pompe_Nov2020,Kursawe_Wendy_Jul2020,Zarbafian_Sydney_Lyra_May2023}, hide transaction contents until commit~\cite{Stathakopoulou_TEEs_2021,Malkhi_Chainlink_Dec2022}, or randomly permute committed sets~\cite{Yakira_Helix_2021,Alpos_EatingSandwiches_2023,Kavousi_Blindperm_2023}.

\smallskip
\noindent\textbf{Attacks on batch-OF.}
The Condorcet~\cite{Vafadar_Alberta_Condorcet_Jun2023} and Ambush~\cite{Park_KAIST_Ambush_Jun2025} attacks exploit client-induced information asymmetry across nodes to force Condorcet cycles that capture honest transactions, and both works independently propose internal transaction gossiping among nodes as the mitigation.
This mitigation is provided natively by Narwhal's dissemination layer, where every worker batch is consistently broadcast to the corresponding workers on all other nodes, and thus transfers to Herring without additional machinery. Figure~\ref{fig:exp_adv_reorder} possibly hints at DAG BFT protocols being resistant by design to such batch-OF attacks. 

\smallskip
\noindent\textbf{DAG BFT consensus.}
A parallel line of work has re-architected BFT consensus around DAGs in order to remove the single-leader bottleneck.
\textit{Certified} designs enforce non-equivocation at the dissemination layer through reliable broadcast~\cite{DAG_Rider_Jun2021,Narwhal&Tusk_Mar2022,Bullshark_Sep2022,Shoal_2023,Shoal++_2025}.
\textit{Uncertified} designs relax this to best-effort broadcast in exchange for lower commit latency and handle equivocation at the ordering layer~\cite{Cordial_Miners_2023,Mysticeti_2025,Mahi-Mahi_2025,Starfish_2025}.
\textit{Hybrid} designs use the DAG only as a dissemination layer and defer ordering to a leader-based finalization protocol~\cite{Autobahn_2025}.
None of these protocols enforce any OF property. As Herring works post-consensus and requires minimal changes in the underlying DAG (addition of LOIs and the self-referencing rule), we believe Herring's design can be easily applied to any such DAG BFT in order to enable the OF property. We leave this as future work.

\smallskip
\noindent\textbf{Fair ordering on DAG BFT.}
Only FairDAG-RL~\cite{Kang_FairDAG_Aug2025} and DoD~\cite{Nagda_UPenn_PhD_2025} realize batch-OF on top of DAG BFT, and both are dissected in detail against Herring in Section~\ref{sec:herring}, Appendix~\ref{liveness_attacks}, and the experimental evaluation.
On the consensus-layer side, Zhang et al.~\cite{Zhang_Purdue_Oct2024} demonstrate inter-block frontrunning attacks targeting the DR-first ordering rule of \textit{Tusk} and \textit{Bullshark}, which are orthogonal to the batch-OF layer and apply to Herring only to the extent that they apply to its underlying DAG BFT consensus. Upon playing with the released code of Zhang et al.~\cite{Zhang_Purdue_Oct2024} we noticed that the attack activation code does not reflect the theoretical definition given in the paper, and thus we defer the proper evaluation as future work.

\smallskip
\noindent\textbf{Liveness of prior batch-OF DAG BFT protocols.}
While reimplementing FairDAG-RL~\cite{Kang_FairDAG_Aug2025} and DoD~\cite{Nagda_UPenn_PhD_2025} from their pseudocode descriptions as baselines, we identified previously unreported liveness bugs in the fairness layer of each protocol. In FairDAG-RL, the weight-update loop of the graph construction algorithm only considers ordering indicators that arrive in the current subdag, silently discarding indicators deposited while the transaction was still blank. A concrete input sequence can make the dependency graph permanently fail to become a tournament, stalling all future finalization. In DoD, three independent issues interact to prevent the missing edge weight store from ever reaching the edge threshold for a common class of transaction pairs, stalling the execution queue. We describe the bugs, concrete triggering scenarios, and the patches we integrated into our reimplementations in Appendix~\ref{liveness_attacks}.

\section{Conclusion}

We presented Herring, the first batch-OF DAG BFT protocol whose dominant fairness layer cost runs embarrassingly parallel across committed subdags.
The design starts from the observation that the dominant cost in any Themis-style fairness layer, the pairwise weight matrix, has no cross subdag data dependency, and exploits this by running multiple subdags' graph construction tasks concurrently on a thread pool.
Two supporting design choices keep this parallelism sound: post-consensus graph construction (shared with FairDAG-RL) that keeps all fairness work off the DAG's critical path, and explicit FairUpdate vote resolution (inspired by Themis) that piggybacks on Narwhal's batch dissemination and lets each parked graph accumulate its resolutions independently.
Along the way, we identified and patched previously unreported liveness bugs in both FairDAG-RL and DoD.
Our evaluation shows that Herring tracks the throughput of Narwhal \& Tusk closely up to roughly $10{,}000$\,tx/s, sustains up to $100\%$ higher throughput and $75\%$ lower latency than the two DAG-based baselines, and remains essentially unaffected by colluding Byzantine nodes.

\printbibliography

\section{Open Science}

This paper presents Herring, a DAG-based consensus protocol that is designed to ensure batch-order fairness without sacrificing high performance.
To support reproducibility and independent verification, we have open-sourced the complete implementation of Herring, alongside our implementations of the FairDAG-RL and DoD-W baselines, at \url{https://github.com/randomUserGithub123/narwhal}.

\section{Ethical Considerations}
This work raises no ethical concerns.
All data, source code, and measurement results used in this paper are publicly available and contain no personally identifiable information.
No human or animal subjects were involved in any part of the experimental procedure, and the work does not raise concerns related to health care, environmental impact, or military applications.

\appendix

\section{Full Correctness Proofs for Herring}
\label{sec:full_proofs}

Throughout this appendix, $\tau = n(1 - \gamma) + f + 1$ and $\tau_s = n - 2f$, with $n > 4f/(2\gamma - 1)$ and $\tfrac{1}{2} < \gamma \leq 1$. We use the cumulative state notation of Section~\ref{sec:herring_graph}, with an explicit round superscript where the proofs require it. $\Pi_i^r$ is replica $i$'s LOI ordered pending list at the moment $A_r$'s snapshot is extracted, i.e., the transactions that replica $i$ has locally observed and not yet contributed to any earlier $G_{r'}$. We treat $\Pi_i^r$ as a sequence when LOI order matters and as a set otherwise, and we drop the superscript and write $\Pi_i$ whenever the round is clear from context. $P^r$ is the set of transactions that have entered some $G_{r'}$ with $r' < r$, and $P_i^r$ is the subset that $i$ has locally assigned an LOI to, which we refer to as $i$'s graphed contributions up to $A_r$. $\{L_i^r\}$ is the snapshot extracted at $A_r$, where each $L_i^r$ is the LOI ordered prefix of $\Pi_i^r$ consisting of $i$'s contributions that have been sealed into DAG vertices committed by the time $A_r$ is processed. $K_r$ is the set of vertices retained in $G_r$ after anchor truncation.

The proof structure mirrors Themis~\cite{Kelkar_Cornell_Themis_Nov2022}. The first lemma shows that Herring's parallel execution is observationally equivalent to a serial reference in which each subdag's task runs to completion before the next begins. After establishing this equivalence, we argue about $G_r$ in isolation, exactly as Themis argues about a single leader proposal. The remaining safety and fairness lemmas lift from Themis with the substitution of the leader's collected list $L$ by the snapshot $\{L_i^r\}$ and of the leader proposal by $A_r$'s dependency graph $G_r$.

\begin{lemma}[Reduction to serial execution]
\label{lem:reduction}
Define the serial reference of Herring as the variant in which each $A_r$'s graph construction task runs to completion, including any subsequent ApplyFairUpdate resolution, before $A_{r+1}$'s task begins. For every committed sequence of subdags $A_1, A_2, \dots$, Herring's parallel execution and the serial reference emit the same total transaction order.
\end{lemma}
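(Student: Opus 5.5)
The plan is an induction on the subdag index $r$. The invariant is that, for every $r$, the parallel execution's task for $A_r$ produces the same three objects as the serial reference: the same effective vertex set $V_r$ after the chain filter, the same graph $G_r$ together with its missing-edge set $M_r$ and anchor $a$, and the same retained set $K_r$. It then also produces the same final order $\mathit{ready}[r]$.

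Given this invariant, the lemma follows from the Emit serialization point. \textsc{Emit} outputs $\mathit{ready}[r]$ only after $\mathit{ready}[1],\dots,\mathit{ready}[r-1]$, and it does so in commit order. The emitted log is therefore the concatenation $\mathit{ready}[1]\,\mathit{ready}[2]\cdots$, which is exactly what the serial reference emits, whatever interleaving the thread pool uses.

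For the inductive step, assume the invariant for all $r'<r$. There are three inputs to check.

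\emph{The snapshot.} In the serial reference, the snapshot $\{L_i^r\}$ excludes $P^r=\bigcup_{r'<r}K_{r'}$. In the parallel run, the main thread excludes a different set: the permanently promoted transactions of finished tasks, plus the solid claims $S_{r'}$ of tasks still in flight. I will show the two snapshots agree on every transaction that can influence $G_r$. This rests on two claims.
\begin{itemize}
\item \emph{Every claimed solid lies in $K_{r'}$.} By the anchor rule, the anchor is the last SCC meeting $S_{r'}$, so every element of $S_{r'}\cap V_{r'}$ sits at or before the anchor. I also need that no claimed solid was filtered out by the chain before Tarjan. This holds because the earlier claims already removed it at extract time, and shaded-to-solid promotions of chain elements are excluded by the induction hypothesis.
\item \emph{The other elements of $K_{r'}$ with $r'<r$ are shaded transactions.} They may still appear in the parallel snapshot, but Phase~2 removes them, because it awaits $\bigcup_{r'<r}K_{r'}$, which is correct by induction. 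The counts $c(\cdot)$ and weights $w$ of the remaining vertices are per-pair and per-transaction quantities over the same orderings $L_i$. Extra transactions in the snapshot can change neither of them, except through the prefix structure of $L_i$, and I will argue that it does not: relative order within $L_i$ is LOI order, so deleting elements does not change any remaining pair's precedence.
\end{itemize}
Together these show that the parallel and serial runs yield identical $V_r$ and identical edge decisions in Phase~2, and hence identical $G_r$, $M_r$, $a$ and $K_r$.

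\emph{The resolved graph.} The graph after \textsc{ApplyFairUpdate} depends only on the parked triple $(G_r,M_r,a)$ and on the vote multiset $\mathit{votes}[r]$. Votes are routed by their target subdag id and are extracted from committed subdags, which are identical in both executions. I will fix the vote set used as the first $n-f$ authors by commit position, so that the resolution is a deterministic function. Finalization of different $r$ touches disjoint parked state, so running them concurrently causes no interference.

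The main obstacle is the snapshot step. The parallel snapshot for $A_r$ may still contain shaded transactions that the serial reference would already have excluded, and it is taken while earlier tasks are unfinished. The delicate point is whether those phantom shaded transactions alter the support counts or the membership of $V_r$, and in particular whether a transaction could be solid in the parallel snapshot but not in the serial one. My first attempt is to show that both $c(\cdot)$ and the pairwise weights restricted to $V_r\setminus P^r$ are invariant under adding or removing elements of $P^r$ from the $L_i$. If that fails, the fallback is to strengthen the invariant so that it also equates the pending sets $\Pi_i^r\setminus P^r$ in the two runs.
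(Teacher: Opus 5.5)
Your proposal is correct and is essentially the paper's argument. Both proofs use induction on $r$, note that the parallel snapshot exceeds the serial one only by elements of $\bigcup_{r'<r}K_{r'}$ (which the Phase~2 chain filter strips without changing the weights on surviving pairs), and finish with \textsc{ApplyFairUpdate} being deterministic and \textsc{Emit} draining in commit order; your containment of the parallel exclusion set in $P^r$ and your deterministic vote-set choice fill in steps the paper only asserts. One small fix: a claimed solid of $A_{r'}$ need not lie in $K_{r'}$ itself, since a transaction retained as shaded in an in-flight $K_{r''}$ ($r''<r'$) can become solid in $A_{r'}$'s snapshot, be claimed, and then be removed by the chain filter, so state that bullet as $S_{r'}\subseteq\bigcup_{r''\le r'}K_{r''}$ (filtered elements lie in the chain, survivors lie in $K_{r'}$ by the anchor rule), which is all the containment argument needs.
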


\begin{proof}
By induction on $r$, both $G_r$ and the set of FairUpdate edges admitted into $G_r$ agree in the parallel and serial executions.

For the base case, $A_1$'s task receives the empty cumulative chain in both executions and its snapshot $\{L_i^1\}$ is the same in both, since no prior exclusions exist. The four phases are deterministic functions of the snapshot, so $G_1$ and $K_1$ agree.

For the inductive step, assume the claim holds for all $r' < r$. The cumulative chain $\bigcup_{r' < r} K_{r'}$ is then identical in both executions. In the parallel execution, $A_r$'s snapshot at the synchronous extract step may additionally contain shaded vertices belonging to in flight predecessors, since those tasks have not yet returned their results to the main thread. The cumulative chain filter in Phase~2 removes exactly those vertices from the active set before edge construction. For every pair $(u, v)$ with both endpoints in the post filter active set, the orderings that contain both $u$ and $v$ are the same in the two executions, because the snapshots agree on every replica's contribution involving such pairs. The cached weights between active pairs therefore agree, and Phase~3 produces the same $G_r$ and $K_r$.

FairUpdate votes ride on Narwhal's reliable broadcast and are observed by both executions as part of the same committed vertex sequence. ApplyFairUpdate is a deterministic function of $(G_r, \text{votes})$, so the admitted FairUpdate edges agree.

Emit drains finalized orders in subdag commit order in both executions. Since the per subdag orders agree and the commit order is identical, the emitted total order agrees.
\end{proof}

A direct consequence is the single graph property, which Themis obtains for free from serial proposal construction.

\begin{corollary}[Single graph]
\label{cor:single_graph}
Each transaction enters at most one $G_r$.
\end{corollary}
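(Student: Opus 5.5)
By Lemma~\ref{lem:reduction}, the parallel execution emits the same total order as the serial reference, and each $G_r$ in the parallel run coincides with the $G_r$ of the serial run. It therefore suffices to prove the single graph property for the serial reference. The serial argument is the same one Themis uses for consecutive leader proposals. I will argue by contradiction. Suppose some $\mathit{tx}$ enters both $G_r$ and $G_{r'}$ with $r < r'$, and take $r'$ minimal.

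In the serial reference, $A_r$'s task, including any ApplyFairUpdate, completes before $A_{r+1}$'s task starts. On result arrival the main thread promotes every vertex of $G_r$ to $\mathit{proposed\_txs}$, i.e.\ into $P^{r+1}$, and removes it from pending state. So $\mathit{tx} \in P^{r''}$ for every $r'' > r$. By definition, $\Pi_i^{r'}$ contains only transactions that replica $i$ has not yet contributed to any earlier $G_{r''}$. The snapshot $L_i^{r'}$ is a prefix of $\Pi_i^{r'}$, so $\mathit{tx} \notin L_i^{r'}$ for every $i$. Hence $c(\mathit{tx}) = 0 < \tau$ in Phase~1 of $A_{r'}$, and $\mathit{tx} \notin V_{r'}$. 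Since $G_{r'}$ has vertex set contained in $V_{r'}$, this contradicts $\mathit{tx} \in G_{r'}$.

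Two further points need checking. First, FairUpdate resolution only adds edges between vertices already in $G_r$ and never adds vertices, so ``entering $G_r$'' is fixed once Phase~3 truncation has run. Second, I will double-check the independent safeguard in Phase~2. The chain $\bigcup_{r''<r'} K_{r''}$ contains $K_r \ni \mathit{tx}$, so $\mathit{tx}$ is filtered from $V_{r'}$ even if the pending-state removal were somehow delayed. Citing both mechanisms makes the argument robust.

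The main obstacle I expect is a precise identification: the vertices that ``enter $G_r$'' and are promoted must be exactly $K_r$, the post-truncation retained set. If they were the pre-truncation $V_r$, truncated shaded transactions would be wrongly excluded forever. This distinction does not threaten the single graph claim itself, only liveness. I will state explicitly that promotion is applied to $K_r$, in line with Algorithm~\ref{alg:herring_graph_new} and the text after it. I will then note that, via Lemma~\ref{lem:reduction}, the solid claim of in-flight subdags in the parallel run only removes transactions that the serial run would also exclude, so no extra case arises.
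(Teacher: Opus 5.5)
Your proposal is correct and follows the paper's own route. You pass to the serial reference via Lemma~\ref{lem:reduction}, then observe that a transaction retained in $K_r$ is excluded from every later active set, so it cannot enter any $G_{r'}$ with $r' > r$. The only difference is emphasis: the paper cites just the cumulative chain for this exclusion, while you lead with pending-state removal and keep the chain as a safeguard. Since the Phase~2 filter by $\bigcup_{r''<r'} K_{r''}$ operates identically in the parallel run, that safeguard alone would give the corollary without the reduction.
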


\begin{proof}
In the serial reference each transaction is either retained in some $K_r$, and thereafter excluded from all snapshots $r' > r$ via the cumulative chain, or returned to pending state and eligible for the next snapshot. Lemma~\ref{lem:reduction} transfers this to the parallel execution.
\end{proof}

By Lemma~\ref{lem:reduction}, the parallel execution and the serial reference emit identical total orders on every committed sequence, so a property of the emitted order holds in one if and only if it holds in the other. For the remainder of the appendix we argue about the serial reference, and every conclusion applies to the parallel execution by equivalence.

\begin{lemma}[LOI monotonicity]
\label{lem:loi_monotone}
For each correct replica $i$ and each committed subdag $A_r$, the concatenation of $i$'s graphed contributions $P_i^r$ and its pending list $\Pi_i^r$, taken in LOI order, is non decreasing in LOI.
\end{lemma}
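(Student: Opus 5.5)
We interpret the claim as follows. Let $\sigma_i^r$ be the sequence obtained by listing $P_i^r$ in the order in which $i$'s contributions were committed, and then appending $\Pi_i^r$. The claim is that $\sigma_i^r$ is non-decreasing in LOI. We will prove this by induction on $r$, working in the serial reference, which Lemma~\ref{lem:reduction} permits.

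The argument rests on two local facts about a correct replica $i$.

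First, LOI assignment is strictly monotone in $i$'s local observation time, and it is stable. The worker's tracker hands out a fresh counter value on first observation, and every later observation returns the same value. So every transaction $i$ seals into a batch carries a well-defined LOI, and successive sealed batches carry increasing LOIs. Since the worker emits direct and indirect entries in the order it assigns LOIs, the contributions inside a single vertex of $i$ also appear in LOI order.

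Second, $i$'s vertices are committed in round order. This follows from the self-referencing rule of Section~\ref{sec:narwhal_tusk}. If $i$'s round-$\rho$ vertex lies in the causal history of leader $L$, then so do all of $i$'s vertices from rounds before $\rho$. Hence they are committed in the same subdag or an earlier one. Within a subdag, the deterministic topological order respects the self-reference chain. Combining the two facts, the sealed contributions of $i$, read in commit order, form a prefix of $i$'s LOI order on sealed transactions.

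For the induction, the base case $r=1$ is immediate. Here $P_i^1=\emptyset$, and $\Pi_i^1$ is kept in LOI order by definition.

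For the inductive step from $r$ to $r+1$, the serial reference moves exactly $K_r \cap \Pi_i^r$ from the pending list into the graphed set. Every other element stays in $\Pi_i^{r+1}$, and newly observed transactions are appended with larger LOIs. Appending new items with larger LOIs clearly preserves monotonicity. The issue is that moving $K_r\cap\Pi_i^r$ could in principle take elements out of the middle of $\Pi_i^r$, leaving some retained element with a smaller LOI than a promoted one.

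This is the main obstacle. We need to show that every element of $\Pi_i^{r+1}$ carries a larger LOI than every element $i$ contributed to $K_r$, or at least that the promoted contributions coming from $i$ form an LOI-prefix of $\Pi_i^r$. I would try two steps. First, use the snapshot definition: $L_i^r$ is exactly the LOI-ordered prefix of $\Pi_i^r$ consisting of sealed and committed contributions, so only elements of this prefix can have their $i$-contribution counted in $G_r$. Second, argue that the graphed contributions of $i$ are those recorded through $i$'s committed vertices.

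If anchor truncation leaves a gap, meaning some element of $L_i^r$ is not retained while a later one is, then I would fall back on the weaker reading of the statement. On that reading, $P_i^r$ is listed in LOI order and the claim concerns the sequence $i$ actually reports, namely $L_i^{r+1}$ as a prefix of $\Pi_i^{r+1}$. The gap then disappears: the unretained element remains pending and is re-reported in LOI order in the next snapshot. Because commit order is a prefix order by the two local facts, no contribution of $i$ with a smaller LOI can be committed after one with a larger LOI. This is exactly what the Themis fairness argument uses downstream.
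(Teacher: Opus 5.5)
The step you flag as the main obstacle cannot be completed, and the problem is not specific to anchor truncation. Read non-vacuously, as ``every LOI in $P_i^r$ lies below every LOI in $\Pi_i^r$'', the statement is false.

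Let a correct $i$ seal $t_1$ and then $t_2$ into one vertex that commits in $A_r$, so $\mathrm{LOI}_i(t_1) < \mathrm{LOI}_i(t_2)$. Suppose that in $A_r$'s snapshot $t_2$ has support at least $\tau_s$, while $t_1$ has support below $\tau$ because the other replicas' vertices carrying $t_1$ have not yet committed. Then $t_2 \in K_r$ by Lemma~\ref{lem:graph_directional}(1), while $t_1 \notin V_r$ stays pending. So $t_2 \in P_i^{r+1}$ and $t_1 \in \Pi_i^{r+1}$, with their LOIs in the wrong order.

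A shaded $t_1$ truncated past the anchor behaves the same way, which is your worry. A third source is a transaction that enters a graph through other replicas' reports before $i$'s own report of it commits. For such transactions your listing of $P_i^r$ ``in commit order'' is not even defined.

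The paper's proof gets past this point only by asserting two things: that what leaves the pending list is a prefix, and that every element of $P_i^r$ was assigned its LOI in a strictly earlier round than every pending one. Both assertions conflate \emph{graphed} with \emph{committed}, and the example above refutes them. So your induction cannot be rescued, and you should not present the concatenation claim as proved.

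Your fallback is the correct statement. Commit to it rather than hedge, and phrase it so that graph retention plays no role. Your two local facts match the first half of the paper's argument. They show that $i$'s contributions sealed into committed vertices always form an LOI-prefix of $i$'s sealed sequence.

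Hence for $\mathit{tx} \in L_i^r$ and $\mathit{tx}' \in \Pi_i^r \setminus L_i^r$ you get $\mathrm{LOI}_i(\mathit{tx}) < \mathrm{LOI}_i(\mathit{tx}')$. Otherwise $\mathit{tx}'$ was sealed in the same vertex as $\mathit{tx}$ or an earlier one. That vertex is committed by the self-referencing rule, and since $\mathit{tx}'$ is still pending it would belong to $L_i^r$, a contradiction. Gaps left by blank or truncated transactions are harmless, since deleting graphed entries from an LOI-sorted list keeps it sorted.

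This ``committed part is an LOI-prefix of the pending list'' property is exactly what Lemma~\ref{lem:solid_precedes} uses when it places $\mathit{tx}$ in $L_i^r$ and $\mathit{tx}'$ in the pending suffix. The downstream argument therefore survives with the lemma restated this way.
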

 
\begin{proof}
The self referencing rule of Section~\ref{sec:narwhal_tusk} forces each round $r$ vertex of correct replica $i$ to reference $i$'s round $r{-}1$ certificate. By induction on round, if $i$'s round $r$ vertex commits in $A_k$ then every earlier vertex of $i$ commits in some $A_j$ with $j \leq k$. Workers assign LOIs strictly monotonically on first observation and seal batches in LOI order, so transactions in $i$'s round $r$ vertex have LOIs above those in any earlier vertex of $i$, and walking through $i$'s committed vertices in subdag commit order therefore yields $P_i^r$ in LOI order. The Ingest step appends to $\Pi_i^r$ in LOI order, and removing the prefix retained in some earlier $G_{r'}$ preserves monotonicity of the remainder. Finally, every transaction in $P_i^r$ was assigned its LOI by $i$ in a round strictly earlier than any transaction still pending at $A_r$, so its LOI is below every entry of $\Pi_i^r$.
\end{proof}

\begin{lemma}[Graph structure and directional safety]
\label{lem:graph_directional}
The output of Algorithm~\ref{alg:herring_graph_new}, together with any subsequent ApplyFairUpdate resolution, satisfies the following.
\begin{enumerate}
\item Every solid transaction in $V_r$ lies in $K_r$.
\item For every solid $\mathit{tx}$ and non-blank $\mathit{tx}'$ in $V_r$, exactly one of $(\mathit{tx}, \mathit{tx}')$ and $(\mathit{tx}', \mathit{tx})$ is present in the finalized $G_r$.
\item If $\gamma n$ replicas received $\mathit{tx}$ before $\mathit{tx}'$, the edge $(\mathit{tx}', \mathit{tx})$ is never added to any $G_r$ nor installed by ApplyFairUpdate.
\end{enumerate}
\end{lemma}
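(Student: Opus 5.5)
Each of the three items can be proved separately, working in the serial reference as licensed by Lemma~\ref{lem:reduction}. Throughout, the snapshot for $A_r$ is $\{L_i^r\}$, with at most $n{-}f$ orderings of which at most $f$ are Byzantine.

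\emph{Item (1).} This is the anchor argument. By construction of Phase~3, the anchor index $a$ is the largest $j$ with $C_j \cap S_r \neq \emptyset$. So every SCC containing a solid vertex has index at most $a$ and is retained by truncation. Hence every solid vertex of $V_r$ that survives the Phase~2 filter lies in $K_r$. One subtlety needs checking: $S_r$ must be taken after the Phase~2 filter, or else a solid vertex removed by the filter would have to have been retained in an earlier $K_{r'}$. This is consistent with the solid-claim exclusion and Corollary~\ref{cor:single_graph}.

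\emph{Item (2).} This is the standard Themis counting argument. Let $\mathit{tx}$ be solid, with $c(\mathit{tx}) \ge n-2f$, and let $\mathit{tx}'$ be non-blank, with $c(\mathit{tx}') \ge \tau$.
\begin{enumerate}
\item Count the orderings containing both transactions. Since there are at most $n-f$ orderings, this number is at least $(n-2f) + \tau - (n-f) = n(1-\gamma)+1$.
\item Handle orderings that contain $\mathit{tx}$ but not $\mathit{tx}'$. By Lemma~\ref{lem:loi_monotone}, such an ordering either places $\mathit{tx}$ before $\mathit{tx}'$ implicitly, or the pair is recorded as missing. In the Themis reading, the only fact needed here is that Phase~2 either adds one orientation or places the pair in $M_r$.
\item Show that ApplyFairUpdate resolves every pair left in $M_r$. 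It collects votes from $n{-}f$ distinct replicas, and every correct voter has both LOIs, because Algorithm~\ref{alg:herring_voting} defers a vote until both are known. So the $n{-}f$ votes give $\max(w_{uv},w_{vu}) \ge (n-f)/2$. The threshold $n > 4f/(2\gamma-1)$ yields $(n-f)/2 \ge \tau$, so one direction is installed. Exactly one edge results because the rule picks the majority direction, with ties broken deterministically.
\end{enumerate}
Whenever the paper requires it, I would invoke the fact that $M_r$ is truncated only to retained vertices, and $\mathit{tx}$ is retained by item (1).

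\emph{Item (3).} This is directional safety, proved by contradiction. Suppose the edge $(\mathit{tx}',\mathit{tx})$ is added. Then at least $\tau = n(1-\gamma)+f+1$ reports, either orderings or votes, place $\mathit{tx}'$ before $\mathit{tx}$. At most $f$ of these are Byzantine, so at least $n(1-\gamma)+1$ correct replicas report $\mathit{tx}'$ before $\mathit{tx}$. For a correct replica, its report in a snapshot or vote reflects its true LOI order. This holds for votes by Algorithm~\ref{alg:herring_voting}. For snapshots it follows from Lemma~\ref{lem:loi_monotone}: $i$'s pending prefix lists transactions in LOI order, so $\mathit{tx}' \prec \mathit{tx}$ in $L_i^r$ means $i$ received $\mathit{tx}'$ first. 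But $\gamma n$ replicas received $\mathit{tx}$ first, so at most $n(1-\gamma)$ replicas received $\mathit{tx}'$ first. This contradicts the count of $n(1-\gamma)+1$.

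\emph{Main obstacle.} The hardest step is item (2), specifically guaranteeing that the tally in ApplyFairUpdate reaches $\tau$. It needs both the inequality $(n-f)/2 \ge n(1-\gamma)+f+1$ under $n>4f/(2\gamma-1)$, which I would verify first, and the assurance that each counted vote is a full directed vote on that pair. If the inequality is only strict in the limit, I would fall back on Themis' argument. That argument counts honest orderings containing the solid $\mathit{tx}$, at least $n-3f$ of them, and uses LOI monotonicity: an honest ordering containing $\mathit{tx}$ but lacking a pending $\mathit{tx}'$ implicitly places $\mathit{tx}$ first.
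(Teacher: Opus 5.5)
Your items (1) and (3) are fine and match the paper's argument. Item (1) is immediate from defining the anchor as the last SCC meeting $S_r$. Item (3) is the same count: at most $n(1-\gamma)$ replicas truly saw $\mathit{tx}'$ first, and adding at most $f$ Byzantine reports still stays below $\tau$. The gap is in item (2), at the step you flagged. Votes from $n-f$ distinct authors are not $n-f$ directed votes on $(u,v)$. Up to $f$ of those authors can be Byzantine and omit the pair, so $\max(w_{uv},w_{vu}) \ge (n-f)/2$ does not follow. That bound would only need $n(2\gamma-1)>3f$, which is a sign it is too generous. What you can guarantee is at least $n-2f$ correct authors. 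Each supplies exactly one direction for every pair of $M_r$, since a correct worker queues $V[r]$ only once $P[r]=\emptyset$. Hence $\max(w_{uv},w_{vu}) \ge \lceil (n-2f)/2\rceil$. Since $(n-2f)/2 > n(1-\gamma)+f$ is equivalent to $n(2\gamma-1)>4f$, the larger tally reaches $\tau$ by integrality, using the threshold with no slack. Your fallback is miscounted the same way: $n-3f$ honest orderings give only $(n-3f)/2$, which beats $n(1-\gamma)+f$ only when $n(2\gamma-1)>5f$.

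The paper's route transfers Themis Lemma~B.1 to the snapshot, with ApplyFairUpdate only as a backup. It counts \emph{all} orderings containing the solid $\mathit{tx}$, Byzantine ones included. Under the reading that a transaction present in $L_i^r$ precedes one absent from it, each such ordering contributes to exactly one of the two weights. For correct replicas, Lemma~\ref{lem:loi_monotone} makes this reading faithful to receive order, as used in Lemma~\ref{lem:solid_precedes}. So $w_{\mathit{tx}\mathit{tx}'}+w_{\mathit{tx}'\mathit{tx}} \ge c(\mathit{tx}) \ge n-2f$, and the same inequality gives the edge already in Phase~2. Your Step~1 count (orderings containing both, at least $n(1-\gamma)+1$) is too weak for this and is never used, and Step~2 never states the count above.

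This also matters for scope. $M_r$ is truncated in Phase~3 before FairPropose is sent, so a vote-tally argument only reaches pairs whose non-blank endpoint survived truncation. It cannot rule out that $\mathit{tx}'$ was cut past the anchor only because its edge to a solid was still missing when TopoSort broke ties. If every solid--non-blank edge is present in Phase~2, each non-blank vertex is directly comparable with the anchor's solid. Then the retained set does not depend on tie-breaking, and ApplyFairUpdate (with the corrected $n-2f$ count) only has to resolve shaded--shaded pairs. For the downstream use in Lemma~\ref{lem:contiguous_cycles}, where $\mathit{tx}' \in K_r$, your retained-pair version would suffice once the vote count is repaired.
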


\begin{proof}
Parts (1) and (2) follow the argument of Themis Lemma~B.1~\cite{Kelkar_Cornell_Themis_Nov2022} with $L$ replaced by $\{L_i^r\}$ and the leader proposal replaced by $G_r$. The threshold condition $n > 4f/(2\gamma - 1)$ ensures the larger direction reaches $\tau$ between any solid and non-blank pair, possibly through ApplyFairUpdate when $G_r$ is parked. Missing edges between two shaded vertices are resolved by ApplyFairUpdate using the same thresholds applied to the FairUpdate vote tallies. For part (3), at least $\gamma n - f$ correct replicas received $\mathit{tx}$ before $\mathit{tx}'$, so the wrong direction count in any snapshot or FairUpdate tally is at most $n(1 - \gamma) + f$, which is strictly less than $\tau$, and the threshold is never crossed in the wrong direction.
\end{proof}

\begin{lemma}[Solid precedes unobserved]
\label{lem:solid_precedes}
Suppose $\mathit{tx}$ is solid in $A_r$'s snapshot and $\mathit{tx}'$ has support strictly less than $\tau$ in $A_r$'s snapshot. Then at least $n(1 - \gamma) + 1$ correct replicas received $\mathit{tx}$ before $\mathit{tx}'$ in receive order.
\end{lemma}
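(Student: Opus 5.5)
The plan is a counting argument on the snapshot $\{L_i^r\}$, followed by a per-replica receive-order argument that uses the prefix structure of each $L_i^r$ and Lemma~\ref{lem:loi_monotone}.

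\emph{Counting.} Since $\mathit{tx}$ is solid, it appears in at least $\tau_s = n-2f$ of the lists $L_i^r$. Since $\mathit{tx}'$ has support below $\tau$, it appears in at most $\tau - 1 = n(1-\gamma)+f$ of them. Hence at least $(n-2f) - (n(1-\gamma)+f) = \gamma n - 3f$ lists contain $\mathit{tx}$ but not $\mathit{tx}'$. At most $f$ of these belong to Byzantine replicas. So at least $\gamma n - 4f$ correct replicas $i$ have $\mathit{tx} \in L_i^r$ and $\mathit{tx}' \notin L_i^r$. The threshold $n > 4f/(2\gamma-1)$ rearranges to $\gamma n - 4f > n(1-\gamma)$. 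Because replica counts are integers, this gives at least $n(1-\gamma)+1$ such correct replicas, rounded up appropriately when $\gamma n$ is fractional. I would state this integrality step carefully, since the thresholds are real-valued, so that the count is exactly what the lemma claims.

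\emph{Receive order at each such replica.} Fix a correct $i$ with $\mathit{tx} \in L_i^r$ and $\mathit{tx}' \notin L_i^r$. I split into cases on the status of $\mathit{tx}'$ at $i$.
\begin{itemize}
\item If $i$ has not yet assigned an LOI to $\mathit{tx}'$, then $i$ first observes $\mathit{tx}'$ after it has already sealed $\mathit{tx}$. Since LOIs are assigned monotonically on first observation, $i$ received $\mathit{tx}$ first.
\item If $\mathit{tx}' \in \Pi_i^r$ but $\mathit{tx}' \notin L_i^r$, then, because $L_i^r$ is an LOI-ordered prefix of $\Pi_i^r$ and $\mathit{tx}$ lies in that prefix, we get $\mathrm{loi}_i(\mathit{tx}) < \mathrm{loi}_i(\mathit{tx}')$.
\item If $\mathit{tx}' \in P_i^r$, i.e.\ $\mathit{tx}'$ was already graphed, then it also lies in $P^r$ and is excluded from this snapshot. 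In the serial reference, which is justified by Lemma~\ref{lem:reduction}, such a $\mathit{tx}'$ does not compete with $\mathit{tx}$ in $G_r$, and the lemma is only invoked for pending $\mathit{tx}'$. If needed, I would add this as an explicit hypothesis or dispatch it with Corollary~\ref{cor:single_graph}.
\end{itemize}
In the first two cases, LOI order equals first-observation order, and Lemma~\ref{lem:loi_monotone} ensures the snapshot prefix respects it. So $i$ received $\mathit{tx}$ before $\mathit{tx}'$.

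\emph{Main obstacle.} I expect the delicate step to be the third case, together with justifying that ``$\mathit{tx}' \notin L_i^r$'' really means ``later in LOI'' rather than ``sealed in a vertex that has not yet committed while a later vertex did.'' That gap is exactly what the self-referencing rule and Lemma~\ref{lem:loi_monotone} close: committed contributions of a correct replica form an LOI prefix. I would lean on that lemma explicitly rather than reargue it.
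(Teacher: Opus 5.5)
Your proposal follows the paper's proof essentially step for step: the same $\gamma n - 3f$ and $\gamma n - 4f$ count, the same split on whether a correct replica has observed $\mathit{tx}'$ (with Lemma~\ref{lem:loi_monotone} closing the pending case), and the same integrality finish. Your third case is a genuine refinement that the paper's two-way split silently skips, since an already-graphed $\mathit{tx}'$ has support $0$ in the snapshot and the conclusion can fail for it, so close it with the explicit hypothesis $\mathit{tx}' \notin P^r$ (true at the only call site, Lemma~\ref{lem:contiguous_cycles}, where Corollary~\ref{cor:single_graph} verifies it; the corollary cannot rescue the lemma as stated); also, as in the paper's ``by integrality'', your ``rounded up appropriately when $\gamma n$ is fractional'' does not work, because for non-integral $n(1-\gamma)$ both $c(\mathit{tx}') \leq \tau - 1$ and the rounding fail (with $n=5$, $f=1$, $\gamma=0.95$, replica $5$ Byzantine, $\mathit{tx}$ in $L_1^r, L_2^r, L_5^r$ and $\mathit{tx}'$ in $L_1^r, L_2^r$, no correct replica is forced to have received $\mathit{tx}$ first), so simply assume $\gamma n$ is an integer.
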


\begin{proof}
The argument follows Themis Lemma~B.2~\cite{Kelkar_Cornell_Themis_Nov2022}. Since $\mathit{tx}$ is solid and $\mathit{tx}'$ has support below $\tau$, the number of snapshot orderings $L_i^r$ that contain $\mathit{tx}$ but not $\mathit{tx}'$ is at least $\gamma n - 3f$. At most $f$ of these come from Byzantine replicas, leaving at least $\gamma n - 4f$ correct replicas with $\mathit{tx} \in L_i^r$ and $\mathit{tx}' \notin L_i^r$ at the moment of $A_r$'s snapshot. For such a correct replica $i$, either $i$ has not yet observed $\mathit{tx}'$, in which case $i$ trivially received $\mathit{tx}$ first, or $i$ has observed $\mathit{tx}'$ but with an LOI past the prefix it contributed to $A_r$, so $\mathit{tx}' \in \Pi_i^r$. In the latter case Lemma~\ref{lem:loi_monotone} gives $\mathrm{LOI}_i(\mathit{tx}) < \mathrm{LOI}_i(\mathit{tx}')$, since $\mathit{tx}$ lies in $i$'s committed prefix $L_i^r$ and $\mathit{tx}'$ in its pending suffix $\Pi_i^r$, and the worker's monotonic LOI assignment forces $i$ to have received $\mathit{tx}$ first. Under $n > 4f/(2\gamma - 1)$, $\gamma n - 4f \geq n(1 - \gamma) + 1$ by integrality.
\end{proof}

\begin{lemma}[Contiguous cycles]
\label{lem:contiguous_cycles}
Suppose $\mathit{tx}$ finalizes in $G_s$ and $\mathit{tx}'$ in $G_r$ with $s > r$, and $\gamma n$ replicas received $\mathit{tx}$ before $\mathit{tx}'$. Then $\mathit{tx}$ and $\mathit{tx}'$ belong to the same Condorcet cycle.
\end{lemma}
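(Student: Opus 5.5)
We work in the serial reference, which Lemma~\ref{lem:reduction} justifies, and follow the structure of Themis Lemma~B.3~\cite{Kelkar_Cornell_Themis_Nov2022}. The setup is $\mathit{tx}' \in K_r$, $\mathit{tx} \in K_s$ with $s > r$, and $\gamma n$ replicas received $\mathit{tx}$ before $\mathit{tx}'$. By Corollary~\ref{cor:single_graph}, $\mathit{tx}$ did not enter $G_r$. So at $A_r$'s snapshot, $\mathit{tx}$ was either (a) blank, meaning its support was below $\tau$, or (b) non-blank but truncated past the anchor $C_a$ of $G_r$. The goal is to show that in each case $\mathit{tx}'$ is pulled into $G_r$ only through a chain of edges that makes it part of the Condorcet cycle that later captures $\mathit{tx}$. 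Concretely, we aim to show there is a path from $\mathit{tx}$ to $\mathit{tx}'$ in the union graph, while the forbidden edge direction supplies the path from $\mathit{tx}'$ to $\mathit{tx}$.

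\textbf{Case (b).} Suppose $\mathit{tx} \in V_r$ and lies in some SCC $C_j$ with $j > a$. We first check whether $\mathit{tx}'$ is solid in $A_r$. If it were, part~(2) of Lemma~\ref{lem:graph_directional} would give an edge between $\mathit{tx}$ and $\mathit{tx}'$ in the finalized $G_r$. Part~(3) forbids the direction $(\mathit{tx}', \mathit{tx})$, so the edge must be $(\mathit{tx}, \mathit{tx}')$. That edge places $\mathit{tx}$ topologically at or before $\mathit{tx}'$'s SCC, which is at or before the anchor, contradicting truncation. Hence $\mathit{tx}'$ is shaded and is retained only because it has a path to some solid $\mathit{tx}_s \in C_{\le a}$. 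By part~(2), $\mathit{tx}$ and $\mathit{tx}_s$ are joined by an edge. Since $\mathit{tx}$ lies after the anchor, that edge must be $(\mathit{tx}_s, \mathit{tx})$. Concatenating gives the path $\mathit{tx}' \rightsquigarrow \mathit{tx}_s \to \mathit{tx}$. The $\gamma$-preference of $\mathit{tx}$ over $\mathit{tx}'$ then serves as the reverse direction, and by part~(3) it is the only admissible edge between them. We argue that under unspooling this closes a cycle in the cumulative graph, exactly as in Themis, where $\mathit{tx}$ joins the same batch that $\mathit{tx}'$ and $\mathit{tx}_s$ began.

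\textbf{Case (a).} Suppose $\mathit{tx}$ has support below $\tau$ at $A_r$. If $\mathit{tx}'$ is solid, Lemma~\ref{lem:solid_precedes} gives at least $n(1-\gamma)+1$ correct replicas receiving $\mathit{tx}'$ before $\mathit{tx}$. But at least $\gamma n$ replicas received $\mathit{tx}$ before $\mathit{tx}'$, so at most $n - \gamma n$ received $\mathit{tx}'$ first. This is a contradiction, so $\mathit{tx}'$ must be shaded. Then $\mathit{tx}'$ reaches a solid $\mathit{tx}_s$ in $G_r$, and Lemma~\ref{lem:solid_precedes} applied to the pair $(\mathit{tx}_s, \mathit{tx})$ shows that a non-trivial set of correct replicas place $\mathit{tx}_s$ before $\mathit{tx}$. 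We then aim for the same cycle structure as in Case~(b). Here Lemma~\ref{lem:loi_monotone} is needed to relate snapshot membership to true receive order.

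\textbf{Main obstacle.} The hard step is turning ``path plus preference'' into genuine membership in the same Condorcet cycle when the two transactions sit in different per-subdag graphs, which are never jointly materialized. We expect to define the cycle in the cumulative sense used by Themis's unspooling, namely the SCC in the union of $G_r, G_{r+1}, \ldots, G_s$ together with the preference edges. We also need LOI monotonicity to ensure that the Case~(a) counts carry across subdags without double counting. We would try to import Themis's argument verbatim with $L$ replaced by $\{L_i^r\}$, and check only the inequality $\gamma n - 4f \ge n(1-\gamma)+1$ at each step.
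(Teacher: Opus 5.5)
Your proposal is correct in substance and shares the paper's skeleton: the Themis~B.3 case split, in the serial reference, on the status of $\mathit{tx}$ in $A_r$'s snapshot.

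\textbf{Case~(a).} This is the paper's second case essentially verbatim. You take a path in $G_r$ from $\mathit{tx}'$ to a solid $Z$, apply Lemma~\ref{lem:solid_precedes} to $Z$ and $\mathit{tx}$, and let the hypothesis close $[\mathit{tx}', \dots, Z, \mathit{tx}, \mathit{tx}']$. Your extra exclusion of a solid $\mathit{tx}'$ is valid but not needed.

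\textbf{Case~(b).} Here you genuinely differ. The paper shows this case cannot occur. It asserts that at least $\gamma n - f \geq \tau$ snapshot orderings place $\mathit{tx}$ before $\mathit{tx}'$, so $(\mathit{tx}, \mathit{tx}') \in G_r$ already in Phase~2. That puts $[\mathit{tx}]$ at or before $[\mathit{tx}']$, hence $\mathit{tx} \in K_r$, contradicting Corollary~\ref{cor:single_graph}. You instead derive $(\mathit{tx}_s, \mathit{tx})$ from the truncation and close a cycle of the same shape as in case~(a).

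The paper's route is a one-liner, but it rests on that count. The count tacitly assumes that the correct replicas preferring $\mathit{tx}$ all contribute orderings with $\mathit{tx}$ first to $A_r$'s snapshot. Your route needs no such count. It also covers exactly the configuration in which case~(b) can arise: $\{\mathit{tx}, \mathit{tx}'\}$ is a missing pair at Phase~3. If the edge were present, part~(3) would make it $(\mathit{tx}, \mathit{tx}')$, and that would pull $\mathit{tx}$ into $K_r$.

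In exchange, your case~(b) needs the solid--non-blank edges of Lemma~\ref{lem:graph_directional}(2) to exist before anchor truncation, directly from the snapshot weights as in Themis. It is not enough that they appear ``in the finalized $G_r$'' as you write. An edge installed later by ApplyFairUpdate cannot constrain the Phase~3 topological order.

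\textbf{The obstacle you flag.} It does not arise. In the paper, being in the same Condorcet cycle just means lying on such a cyclic sequence of receive-order preferences, each link backed by at least $n(1-\gamma)+1$ correct replicas:
\begin{itemize}
\item graph edges qualify because $\tau - f = n(1-\gamma)+1$;
\item the solid-to-$\mathit{tx}$ link qualifies by Lemma~\ref{lem:solid_precedes};
\item the closing link holds by hypothesis.
\end{itemize}
Apart from the hypothesis, all evidence comes from $A_r$'s snapshot. So you need neither a union graph over $G_r, \dots, G_s$ nor any cross-subdag counting.
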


\begin{proof}
The argument is the Themis Lemma~B.3~\cite{Kelkar_Cornell_Themis_Nov2022} case analysis applied to the serial reference. By Lemma~\ref{lem:graph_directional}(3) the edge $(\mathit{tx}', \mathit{tx})$ is never added.

If $\mathit{tx}$ has support at least $\tau$ in $A_r$'s snapshot, then $\mathit{tx} \in V_r$. At least $\gamma n - f \geq \tau$ orderings place $\mathit{tx}$ before $\mathit{tx}'$, so $(\mathit{tx}, \mathit{tx}') \in G_r$. In the condensation, $[\mathit{tx}]$ either coincides with $[\mathit{tx}']$ or precedes it topologically, and since $\mathit{tx}' \in K_r$ both cases give $\mathit{tx} \in K_r$, contradicting $s > r$ by Corollary~\ref{cor:single_graph}.

If $\mathit{tx}$ has support strictly less than $\tau$ in $A_r$'s snapshot, then since $\mathit{tx}' \in K_r$ the anchor SCC of $G_r$ contains some solid $Z$. By Lemma~\ref{lem:graph_directional}(2) exactly one of $(\mathit{tx}', Z)$ and $(Z, \mathit{tx}')$ is in the finalized $G_r$, so there is a path $\mathit{tx}' = u_0 \to u_1 \to \cdots \to u_l = Z$ in $G_r$. Each edge on the path corresponds to at least $n(1 - \gamma) + 1$ correct replicas agreeing on direction. Combined with Lemma~\ref{lem:solid_precedes} applied to $Z$ and $\mathit{tx}$, and with the hypothesis on $\mathit{tx}$ and $\mathit{tx}'$, the sequence $[\mathit{tx}', u_1, \dots, u_{l-1}, Z, \mathit{tx}, \mathit{tx}']$ forms a Condorcet cycle.
\end{proof}

\begin{lemma}[FairUpdate votes arrive]
\label{lem:votes_arrive}
Every parked subdag eventually accumulates votes from at least $n - f$ distinct replicas.
\end{lemma}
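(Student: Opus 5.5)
The argument fixes a parked subdag $A_r$ with missing edge set $M_r$ and shows that each of the at least $n-f$ correct replicas eventually casts a complete vote for $r$ and that the vote is eventually committed. Since the vote threshold in Algorithm~\ref{alg:herring_finalize_new} counts distinct authors, $n-f$ correct voters suffice, whatever the Byzantine replicas do.

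\emph{Step 1: every correct replica parks $A_r$ with the same $M_r$.} By Theorem~\ref{thm:agreement_main}, or directly by Lemma~\ref{lem:reduction} together with determinism of the serial reference, every correct replica computes the same $(G_r, M_r, a)$. Hence each correct replica's fairness processor eventually issues $\mathit{FairPropose}(r, M_r)$ to its local worker. This uses the agreement property of Narwhal \& Tusk: every correct replica eventually commits $A_r$ and all earlier subdags. It also uses the fact that the tasks for $A_1,\dots,A_r$ terminate through Phase~3, since each phase is a finite computation and the cumulative chain for $A_r$ is sent by $A_{r-1}$'s Phase~3 before any waiting on votes. The induction therefore does not circularly depend on earlier parked graphs resolving.

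\emph{Step 2: every correct worker empties $P[r]$.} Every endpoint of a pair in $M_r$ lies in $V_r$, so it has support at least $\tau \geq f+1$ in the snapshot. It therefore appears in some ordering contributed by a correct replica, and its batch was reliably broadcast by Narwhal. By the availability guarantee of Narwhal's certificates, every correct worker eventually receives that batch and records an LOI for the transaction on first observation (the \textsc{Record} step of Algorithm~\ref{alg:herring_voting}). So for each $(u,v) \in M_r$, both LOIs are eventually known at every correct worker, whether they are already known at $\mathit{FairPropose}$ time or arrive later through the ``new transaction observed'' handler. Since $M_r$ is finite, $P[r]$ eventually becomes empty and $V[r]$ is queued into an outgoing batch. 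I expect this step to be the main obstacle. It needs care that garbage collection does not drop the batch containing $u$ before a slow correct worker fetches it. I would argue via the retrievability of certified vertices from $f+1$ honest holders, and note that $u$ was in a committed subdag, so its data was retrieved for commitment.

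\emph{Step 3: votes are committed and routed.} The batch carrying a correct replica's votes is broadcast by a correct worker and referenced by a correct primary's vertex. By Narwhal's validity, together with the reinjection footnote for orphaned vertices, it is eventually committed in some subdag $A_{r'}$ at every correct replica. The fairness processor then routes it to $\mathit{votes}[r]$ under that author. Taking this over all $n-f$ correct replicas gives $|\mathit{votes}[r]| \geq n-f$, which is what the lemma claims.
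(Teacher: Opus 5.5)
Your proposal is correct and follows the paper's route. Parking sends $\mathit{FairPropose}(r, M_r)$ to the local worker, the persistent LOI tracker records a vote for each pair once both endpoints are observed, and the queued votes ride Narwhal's broadcast and are committed by validity, which yields $n-f$ distinct correct authors. Your Step~1 and your use of $\tau \geq f+1$ make explicit what the paper leaves implicit. Step~1 shows all correct replicas park $A_r$ with the same $M_r$, with no circular wait because $K_{r-1}$ is forwarded in Phase~3. The $\tau \geq f+1$ bound shows some correct replica reported each endpoint, whereas the paper appeals to clients broadcasting to all replicas, which covers only correct clients. Your garbage-collection worry is moot: that correct replica's own worker broadcast a batch containing the endpoint, and messages between correct replicas are eventually delivered.
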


\begin{proof}
On parking $A_r$ with missing edges $M_r$, the fairness processor sends $\mathit{FairPropose}(r, M_r)$ to its local worker. The LOI tracker is persistent, so for each $(u, v) \in M_r$ the worker records a directed vote as soon as both endpoints are observed. Clients broadcast to all replicas and indirect entries propagate transaction observations through Narwhal's reliable broadcast, so every correct replica eventually observes both endpoints. Votes queue into the next outgoing batch and reach all correct replicas by Narwhal's validity. Within finitely many rounds, $A_r$'s parked graph accumulates votes from at least $n - f$ replicas.
\end{proof}

\subsection*{Main Theorems}

\begin{proof}[Proof of Theorem~\ref{thm:agreement_main}]
By Lemma~\ref{lem:reduction}, the emitted sequence of every correct replica matches that of the serial reference. The serial reference is a deterministic function of the committed subdag sequence and the observed FairUpdate vote set, both of which are identical across replicas by Tusk's agreement and total order properties.
\end{proof}

\begin{proof}[Proof of Theorem~\ref{thm:fairness_main}]
Let $\mathit{tx}$ finalize in $G_s$ and $\mathit{tx}'$ in $G_r$, and suppose $\gamma n$ replicas received $\mathit{tx}$ before $\mathit{tx}'$. The case analysis follows Themis Theorem~4.1~\cite{Kelkar_Cornell_Themis_Nov2022}.

If $r = s$, Lemma~\ref{lem:graph_directional}(3) rules out the wrong direction edge. Either $(\mathit{tx}, \mathit{tx}') \in G_r$, in which case $\mathit{tx}$ either topologically precedes $\mathit{tx}'$ in the condensation or shares an SCC with it and is emitted contiguously under the intra SCC linearization, or the pair is missing and resolved by ApplyFairUpdate in the correct direction. In every case $\mathit{tx}$ is emitted no later than $\mathit{tx}'$.

If $s < r$, Emit drains in subdag commit order, so $\mathit{tx}$ is output before $\mathit{tx}'$.

If $s > r$, Lemma~\ref{lem:contiguous_cycles} places $\mathit{tx}$ and $\mathit{tx}'$ in the same Condorcet cycle, so they share a batch in the maximal cyclic batch partition of the output, and $\mathit{tx}$ is emitted no later than $\mathit{tx}'$.
\end{proof}

\begin{proof}[Proof of Theorem~\ref{thm:liveness_main}]
Let $\mathit{tx}$ be submitted by a correct client. Each correct replica eventually includes $\mathit{tx}$ in a worker batch, and by Narwhal's validity the batch eventually commits. The support of $\mathit{tx}$ reaches $\tau_s$ in some snapshot, and by Lemma~\ref{lem:graph_directional}(1), $\mathit{tx} \in K_r$ for that subdag.

If $G_r$ has no missing edges it finalizes immediately. Otherwise Lemma~\ref{lem:votes_arrive} delivers $n - f$ votes, and Lemma~\ref{lem:graph_directional}(3) forces every missing pair to resolve in the correct direction. By the anchor rule, $G_r$'s finalization depends only on its own vertices, so Condorcet cycles cannot stall it. By induction on commit order every earlier subdag finalizes, and Emit eventually outputs $A_r$'s order.
\end{proof}

\section{Liveness Attacks on Fairness DAG BFT Protocols}\label{liveness_attacks}

While implementing FairDAG-RL~\cite{Kang_FairDAG_Aug2025} and
DoD~\cite{Nagda_UPenn_PhD_2025} from their pseudocode descriptions,
we identified liveness bugs in both protocols' fairness layers. In this section
we describe each bug, give a concrete attack or failure scenario, and state
the fix.

\subsection{Liveness Attack on FairDAG-RL}

In FairDAG-RL, each replica assigns monotonically increasing ordering indicators to transactions as they arrive and broadcasts them as DAG vertices. The underlying DAG consensus periodically commits a leader vertex $L_r$ whose causal history defines a subdag $A_r$ (the set of newly committed vertices not included in any earlier leader's causal history). The fairness layer receives $A_r$ and processes it as follows. First, for each transaction digest $d$ appearing in $A_r$, the layer records the ordering indicator from the proposing replica into a global vector $node(d).committed\_ois$. Second, the layer checks whether $d$ has accumulated enough ordering indicators to be promoted: a node with $ap(d, r) \geq N - f$ indicators becomes \emph{solid}, one with $ap(d, r) \geq \lceil(N{-}f)/2\rceil$ becomes \emph{shaded}, and otherwise it remains \emph{blank}, where non-blank nodes are inserted into a dependency graph $G_r$. Third, the layer updates pairwise edge weights in $G_r$ and adds a directed edge between two nodes once the number of replicas preferring one direction reaches the non-blank threshold of $\lceil(N{-}f)/2\rceil$. The fairness layer finalizes a transaction ordering once $G_r$ becomes a \emph{tournament}, i.e., every pair of nodes is connected by exactly one directed edge.

\smallskip
\noindent\textbf{Root Cause.} The weight update step (Figure~8, lines~19--32 of~\cite{Kang_FairDAG_Aug2025}) iterates over every vertex $v \in A_r$ and every transaction $d$ listed in $v$. For each such $d$, it looks up the dependency graph $G'$ to which $d$ belongs and compares $d$'s ordering indicator from the proposing replica against those of all other nodes in $G'$. Crucially, this comparison fires \emph{only} when $d$ appears in a vertex of the \emph{current} subdag $A_r$ and $d$ already belongs to some graph $G'$ at the time of processing.

This creates a gap where a certain replica $R_k$ may report transactions $d_1$ and $d_2$ in an earlier subdag $A_{r'}$, at which point both are still blank and belong to no graph, so the comparison is silently skipped. When a later subdag $A_r$ finally promotes $d_1$ and $d_2$ into a graph, $R_k$'s vertex in $A_r$ may carry unrelated transactions and thus never re-trigger the comparison. The ordering indicators from $R_k$ remain stored in $committed\_ois$ but are permanently excluded from the weight computation. If $R_k$'s vote is the decisive one needed to push either direction of the pair $(d_1, d_2)$ past the non-blank threshold, no edge is ever added between them, thus the dependency graph
never becomes a tournament, and the fairness layer stalls indefinitely.

\smallskip
\noindent\textbf{Concrete Attack.}We use $N = 4$, $f = 1$. Replica $R_4$ crashes silently, leaving three honest replicas $R_1, R_2, R_3$. The quorum threshold for adding an edge is $\lceil 3/2 \rceil = 2$ and the threshold for promoting a node to solid is $3$. A malicious client triggers the liveness attack by crafting two transactions $a$ and $b$.

\smallskip
\noindent\textbf{Subdag $A_1$.} The client submits $a$ and $b$ exclusively to $R_3$, and unrelated filler transactions to $R_1$ and $R_2$:
\begin{align*}
R_1 &: \{(x, 1)\}, \quad
R_2 : \{(x, 1)\}, \quad
R_3 : \{(a, 1),\, (b, 2)\}
\end{align*}
Only one replica ($R_3$) has seen $a$ and $b$, which is below the non-blank threshold, so both remain blank and enter no graph. However, $R_3$'s ordering ($a$ before $b$) is recorded in $committed\_ois$. The weight-update loop skips the pair because neither node belongs to a graph yet.

\smallskip
\noindent\textbf{Subdag $A_2$.} The client submits $a$ and $b$ to $R_1$ and $R_2$ in opposite orders, and keeps $R_3$ busy with a filler transaction $y$:
\begin{align*}
R_1 &: \{(a, 1),\, (b, 2)\}, \quad
R_2 : \{(b, 1),\, (a, 2)\}, \quad
R_3 : \{(y, 1)\}
\end{align*}
Now $ap(d, r=2) = 3$, i.e., three replicas have seen both $a$ and $b$, so both are promoted to solid and inserted into a new dependency graph $G_2$. The weight update loop processes $A_2$'s vertices where $R_1$ votes $a \prec b$, $R_2$ votes $b \prec a$, and now $R_3$ contributes nothing to the transaction pair (it has done so in previous sub-dag commit). The resulting weights are $G_2.weight[(a,b)] = 1$ and $G_2.weight[(b,a)] = 1$, both below the non-blank threshold of $2$, so no edge is added. The dependency graph is not a tournament and the fairness layer stalls permanently, blocking all future rounds. Had $R_3$'s earlier vote ($a \prec b$ from $A_1$) been counted, the weight $G_2.weight[(a,b)]$ would reach $2$, an edge would be added, and the graph would finalize normally.

\smallskip
\noindent\textbf{Patch.} The root cause is that the weight-update loop (Figure~8, lines~19--32
of~\cite{Kang_FairDAG_Aug2025}) only considers ordering indicators that arrive in the \emph{current} subdag, missing any that were deposited while the transaction was still blank. The fix is to add a \emph{catch-up} pass immediately after lines~11--18 (the node classification step), i.e., whenever a node $d$ is newly promoted into a graph $G_r$, compute its pairwise weights against every other node already in $G_r$ by scanning the \emph{full} $committed\_ois$ vectors of both nodes across all $N$ replicas, rather than relying on the current subdag's vertices as triggers. This is both necessary
and sufficient as every ordering indicator that was recorded while $d$ was blank is now accounted for exactly once, and the incremental loop at lines~19--32 continues to handle any future indicators arriving in later subdags. We have integrated this patch in our implementation of FairDAG-RL and verified that it restores finalization liveness.

\subsection{Liveness Attack on DoD}

In DoD, each replica independently constructs a global dependency
graph from $N{-}f$ collected local-order messages \emph{before} consensus.
For each pair of data-dependent transactions $t$ and $t'$, the replica counts
how many local orders place $t$ before $t'$ (denoted $w(t,t')$) and vice
versa. If the larger count meets or exceeds the edge threshold
$n(1{-}\gamma)+f+1$ and is greater than its counterpart, a directed edge is
added. Otherwise, the pair is recorded as a \emph{missing edge} in a set
$M$, that is then broadcast alongside the global-order graph. Each replica
locally maintains a missing edge store $M_w$ that tracks accumulated weights.
A committed global-order graph can only be executed once all its missing
edges are resolved in the executing replica's $M_w$, otherwise the execution
queue stalls (Algorithm~3, line~18 of~\cite{Nagda_UPenn_PhD_2025}).

\smallskip
\noindent\textbf{Bug~1: Weights diverge across replicas.}
Algorithm~2, line~32 of~\cite{Nagda_UPenn_PhD_2025} broadcasts missing pairs
without their weights. Since each replica constructs its global graph from a
potentially different quorum of $N{-}f$ local orders, the computed weight
for the same pair can differ. For instance, one replica may compute
$w(a,b) = 2$ and store this in its $M_w$, but broadcasts only the bare pair
$(a,b)$ without the weight. Another replica that independently computed
$w(a,b) = 1$ from a different quorum stores its own value of~$1$. Per the
protocol description, a replica increments the weight by~$1$ upon receiving
a missing pair, but even with this update applied, the two replicas' weights
remain permanently diverged.

\smallskip
\noindent\textbf{Bug~2: Weights are inflated by repeated broadcasts.}
The same mechanism also leads to double-counting. Algorithm~2, lines~39--40
of~\cite{Nagda_UPenn_PhD_2025} increment the scalar weight in $M_w$ by~$1$
for each received copy of a missing pair, with no deduplication of the
originating evidence. Consider the case where all $f$ replicas crash
silently: the only available quorum is the $N{-}f$ honest replicas, so every
honest replica collects the same set of local orders, constructs an
identical global-order graph, and broadcasts the same missing pair to all
others. Upon receiving these $N{-}f$ identical broadcasts, each replica
increments the weight $N{-}f$ times despite the underlying evidence
originating from the same set of local orders. The weight thus becomes
inflated and may spuriously cross the edge threshold in a direction that no
quorum of local orders genuinely supports.

\smallskip
\noindent\textbf{Bug~3: Weights are frozen against past-round evidence.}
The protocol provides two mechanisms to accumulate directional weight for a
missing pair $(t, t')$ in $M_w$. The first (Algorithm~1, lines~7--8
of~\cite{Nagda_UPenn_PhD_2025}) increments $w(t', t)$ when transaction $t$
physically arrives at a replica that already tracks the pair. The second
(Algorithm~2, lines~11--12 of~\cite{Nagda_UPenn_PhD_2025}) boosts the weight
when both $t$ and $t'$ appear as vertices in a future round's global
dependency graph. Both are ineffective for a common class of pairs. The
first requires $t$ to arrive \emph{after} the pair is added to $M_w$, but
missing pairs are created during global ordering, which runs after both
transactions have already been received. Since client transactions are
submitted only once, neither will arrive again. The second requires both
transactions to appear as vertices in a future round's global dependency
graph, but each transaction is included in exactly one round's local-order
graph and will not appear in any subsequent round.

\smallskip
\noindent\textbf{Concrete scenario.} We use $N = 5$, $f = 1$, $\gamma = 1$,
with $R_5$ that immediately silently crashes. The non-blank threshold is $n(1{-}\gamma) + f + 1 = 2$ and the solid threshold is $n - 2f = 3$. A client submits two
data-dependent transactions $a$ and $b$ to all four honest replicas, but
due to network asynchrony the transactions arrive in different local-order
rounds:

\begin{center}
\begin{tabular}{lcc}
\toprule
Replica & Round of $a$ & Round of $b$ \\
\midrule
$R_1$ & $r$ & $r$ \\
$R_2$ & $r{-}1$ & $r$ \\
$R_3$ & $r$ & $r{-}1$ \\
$R_4$ & $r$ & $r$ \\
\bottomrule
\end{tabular}
\end{center}

In round $r$'s global ordering, every honest replica collects the same
quorum of $N{-}f = 4$ local orders from $\{R_1, R_2, R_3, R_4\}$. Both $a$
and $b$ appear in three round-$r$ local orders ($a$ in $R_1, R_3, R_4$;
$b$ in $R_1, R_2, R_4$) and are classified as fixed. However, only $R_1$
and $R_4$ include both transactions in their round-$r$ local order; $R_2$
includes only $b$ (it placed $a$ in round $r{-}1$) and $R_3$ includes only
$a$. Suppose $R_1$ received $a$ before $b$ and $R_4$ received $b$ before
$a$. Then $w(a, b) = 1$ and $w(b, a) = 1$. Neither reaches the edge
threshold of~$2$, so $(a, b)$ is recorded as a missing edge.

Although $R_2$ and $R_3$ both have evidence of the pair from round $r{-}1$,
this information was captured in a prior round's local-order graph and is
never contributed to the pair's resolution. Neither weight accumulation
mechanism ever fires, as neither transaction will arrive again nor reappear
in a future round's dependency graph. The weights remain stuck at $1$ permanently, and the execution queue stalls on the first global-order graph containing this pair, blocking all subsequent committed graphs.

\smallskip
\noindent\textbf{Patch.} A similar implicit edge update approach of FairDAG-RL could be applied, but is non-trivial in DoD
because global graphs are constructed \emph{before} consensus. Implicitly
resolved edges may be based on local orders from up to $f$ replicas whose
vertices are ultimately not committed, as the DAG structure of DoD
does not include weak edges. This means implicitly added edges could need
to be reverted after consensus, making the fix non-intuitive. For this
reason, we have chosen to implement \emph{explicit} edge resolution, where
a replica that resolves a missing edge broadcasts the resolved direction to
all other replicas, as done in
Themis~\cite{Kelkar_Cornell_Themis_Nov2022} and Herring.

\end{document}